\newtheorem{thm}{Theorem}[section]
\newtheorem{lma}[thm]{Lemma}
\newtheorem{prop}[thm]{Proposition}
\DeclareMathOperator{\ch}{ch}
\begin{document}

\title{The parameterised complexity of list problems on graphs of bounded treewidth}
\date{}
\author{Kitty Meeks\footnote{\textit{Present address}: School of Mathematics and Statistics, University of Glasgow; \textit{Email:} \texttt{kitty.meeks@glasgow.ac.uk}; \textit{Corresponding author}.} \,  and Alexander Scott\footnote{\textit{Email:} \texttt{scott@maths.ox.ac.uk}}
\\ \small{Mathematical Institute, University of Oxford, UK}}

\maketitle

\begin{abstract}
We consider the parameterised complexity of several list problems on graphs, with parameter treewidth or pathwidth.  In particular, we show that \textsc{List Edge Chromatic Number} and \textsc{List Total Chromatic Number} are fixed parameter tractable, parameterised by treewidth, whereas \textsc{List Hamilton Path} is W[1]-hard, even parameterised by pathwidth.  These results resolve two open questions of Fellows, Fomin, Lokshtanov, Rosamond, Saurabh, Szeider and Thomassen (2011).
\end{abstract}

\section{Introduction}

Many graph problems that are know to be NP-hard in general are fixed parameter tractable when parameterised by the treewidth $k$ of the graph, that is they can be solved in time $f(k) \cdot n^{O(1)}$ for some function $f$.  Often there even exists a linear-time algorithm to solve the problem on graphs of fixed treewidth \cite{arnborg89,borie92,courcelle90,seese86}.  

This is the case for a number of graph colouring problems.  Although it is NP-hard to determine whether an arbitrary graph is 3-colourable, the chromatic number of a graph of treewidth at most $k$ (where $k$ is a fixed constant) can be computed in linear time (Arnborg and Proskurowski \cite{arnborg89}).  Similarly, while it is NP-hard to determine the edge-chromatic number of cubic graphs (Holyer \cite{holyer81}) and more generally $d$-regular graphs for any $d \geq 3$ (Leven and Galil \cite{leven83}), this problem can again be solved in linear time on graphs of bounded treewidth (Zhou, Nakano and Nishizeki \cite{zhou96}).  The pattern is the same for the total colouring problem: this is NP-hard even for regular bipartite graphs (McDiarmid and S\'{a}nchez-Arroyo \cite{mcdiarmid94}) but there exists a linear-time algorithm to solve the problem on graphs of bounded treewidth (Isobe, Zhou and Nishizeki \cite{isobe07}).

However, list versions of such problems cannot always be solved so efficiently on graphs of bounded treewidth.  The well-known list variant of vertex colouring is clearly NP-hard in general (as it generalises the standard colouring problem), but can also be solved in polynomial time on graphs of fixed treewidth $k$, and even in linear time on such graphs if the number of colours is bounded (Jansen and Scheffler \cite{jansen97}).  However, in contrast with the linearity result for \textsc{Chromatic Number}, it has been shown that when the number of colours is unbounded \textsc{List Colouring} is W[1]-hard and so not (up to certain complexity theoretic assumptions) fixed parameter tractable (Fellows, Fomin, Lokshtanov, Rosamond, Saurabh, Szeider and Thomassen \cite{fellows11}).  The list variants of edge-colouring and total colouring are harder still on graphs of bounded treewidth: both problems are NP-hard on series-parallel graphs (Zhou, Matsuo and Nishizeki \cite{zhou05}), which have treewidth at most two, and list edge-colouring is also NP-hard on outerplanar graphs (Marx \cite{marx05}), another class of graphs with treewidth at most two.

Perhaps surprisingly, however, it can be easier to determine the minimum length of lists required to guarantee the existence of a proper list colouring for a graph $G$ than to determine whether $G$ admits a proper colouring with a particular set of lists.  Alongside the negative result about the complexity of \textsc{List Colouring}, in \cite{fellows11} the authors also use Courcelle's theorem \cite{courcelle90} to prove the following result.
\begin{thm}[\cite{fellows11}]
The \textsc{List Chromatic Number} problem, parameterised by the treewidth bound $t$, is fixed-parameter tractable, and solvable in linear time for any fixed $t$.
\label{chi,chi_l}
\end{thm}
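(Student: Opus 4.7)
The plan is to reduce the problem to checking a fixed monadic second-order (MSO) sentence on $G$ whose length depends only on $t$, and then invoke Courcelle's theorem \cite{courcelle90}.

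First, observe that graphs of treewidth at most $t$ are $t$-degenerate, so the standard greedy argument gives $\ch(G) \leq t+1$. It therefore suffices to decide, for each $k \in \{1, \ldots, t+1\}$, whether $G$ is $k$-choosable, and return the smallest such $k$; this contributes only an $O(t)$ factor to the running time.

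Next, I would encode $k$-choosability with respect to a finite colour universe $C = \{1, \ldots, N\}$ as a $\Pi_2$ MSO sentence. A list assignment corresponds to an $N$-tuple $(A_1, \ldots, A_N)$ of vertex sets with $A_i = \{v \in V(G) : i \in L(v)\}$, and the condition $|L(v)| = k$ becomes a fixed-size Boolean combination of membership statements (since $k$ and $N$ are fixed). A proper $L$-colouring is an $N$-tuple $(B_1, \ldots, B_N)$ of pairwise disjoint independent sets satisfying $B_i \subseteq A_i$ and $\bigcup_i B_i = V(G)$. Putting these together, the statement ``every $k$-list assignment from $C$ admits a proper $L$-colouring'' becomes an MSO sentence whose length is bounded in terms of $k$ and $N$.

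The main obstacle is bounding $N$ by a function of $k$ and $t$: we need $N_0(k,t)$ such that $G$ is $k$-choosable iff $G$ is $k$-choosable from every universe of size $N_0(k,t)$. I would argue this structurally along a tree decomposition of $G$: each colour's interaction with a bag of size at most $t+1$ is determined by the subset of bag vertices whose lists contain it, yielding at most $2^{t+1}$ local colour-types per bag. Using a DP-style argument on the tree decomposition, any bad list assignment using more than $N_0(k,t)$ colours can be pruned by identifying colours that are indistinguishable at every bag, producing a bad list assignment with universe of bounded size. Once $N$ is bounded, Courcelle's theorem applied to the fixed MSO sentence yields a linear-time algorithm for each fixed $t$, and iterating over the $O(t)$ candidate values of $k$ preserves this complexity, completing the proof.
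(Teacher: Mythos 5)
The paper does not supply its own proof of this theorem: it is quoted from \cite{fellows11}, and the paper notes only that the result there is established via Courcelle's theorem \cite{courcelle90}. Your overall plan --- bound $\ch(G)$ by $t+1$ via $t$-degeneracy, reduce to deciding $k$-choosability for each $k \leq t+1$, express $k$-choosability over a bounded colour universe as a fixed MSO sentence, and invoke Courcelle's theorem --- is therefore aligned with the cited approach, and the first and third steps (the degeneracy bound, and the MSO encoding of list assignments and colourings as tuples of vertex subsets once the universe is fixed) are correct.

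The gap is exactly at the step you flag as ``the main obstacle'': bounding the colour universe by a function $N_0(k,t)$. The mechanism you propose does not work as stated. Colours that are ``indistinguishable at every bag'' have identical lists throughout $G$ (every vertex lies in some bag), so a bad witness assignment need not contain any such pair and this criterion yields no compression at all. If instead you merge colours indistinguishable at a \emph{single} bag, there is no reason the merge preserves badness: the two colours may be sharply distinguished elsewhere in the decomposition, lists may shrink below size $k$ when a vertex carries both, and new proper colourings may appear. The $2^{t+1}$ ``local types per bag'' you count only describe a colour's trace on one bag, not its behaviour in the subtree hanging below it, which is what actually governs whether identifying two colours changes the existence of a proper list colouring. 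A correct version needs a Myhill--Nerode-style type that records, in addition to the trace on the current bag, which partial colourings of the bag extend into the already-processed subtree, together with an argument that colours used in disjoint subtrees can be reused while colours of equal type at a bag can be identified without destroying badness. Supplying (or citing) that lemma is the real content of the proof and is missing from your sketch.
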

In Section \ref{EdgeChromaticNumber} we show that the same pattern extends to the list chromatic number problems for edge-colouring and total colouring: although these problems are both NP-hard on graphs of treewidth two, it is possible to determine the list edge chromatic number and list total chromatic number of graphs of bounded treewidth in linear time, answering an open question from \cite{fellows11}.

The situation is similar for the problem of determining whether a given graph contains a Hamilton path.  \textsc{Hamilton Path} is known to be computationally difficult in general, remaining NP-hard when restricted to planar, cubic, 3-connected graphs (Garey, Johnson and Tarjan \cite{garey76}) or bipartite graphs (Krishnamoorthy \cite{krishnamoorthy75}), but can be solved in linear time on graphs of bounded treewidth (Arnborg and Proskurowski \cite{arnborg89}).  In Section \ref{ListHamPath} we consider a list variant of the problem, and show that it is unlikely to be fixed parameter tractable, even parameterised by pathwidth, answering another open question from \cite{fellows11}.

In the remainder of this section we define formally the problems whose complexity we consider, and give some background on the treewidth bound and the theory of parameterised complexity.

\subsection{Problems considered}
\label{probs}

A \emph{proper edge-colouring} of a graph $G$ is an assignment of colours to the edges of $G$ so that no two incident edges receive the same colour.  If a set $\mathcal{L} = \{L_e: e \in E(G)\}$ of lists of permitted colours is given, a \emph{proper list edge-colouring} of $(G,\mathcal{L})$ is a proper edge-colouring of $G$ in which each edge $e$ receives a colour from its list $L_e$.  The \emph{list edge chromatic number} of $G$, $\ch'(G)$, is the smallest integer $c$ such that, for any assignment of colour lists to the edges of $G$ in which each list has length at least $c$, there exists a proper list edge-colouring of $G$.  We define the following problem.
\\

\hangindent=1cm
\textsc{List Edge Chromatic Number} \\
\textit{Input:} A graph $G = (V,E)$.\\
\textit{Question:} What is $\ch'(G)$?\\

A \emph{proper total colouring} of a graph $G=(V,E)$ is an assignment of colours to the vertices and edges of $G$ such that no two adjacent vertices or incident edges have the same colour, and no edge has the same colour as either of its endpoints.  If a set $\mathcal{L} = \{L_x: x \in V \cup E\}$ of lists of permitted colours is given, a \emph{proper list total colouring} of $(G,\mathcal{L})$ is a proper total colouring of $G$ in which every vertex and edge receives a colour from its list.  The \emph{list total chromatic number} of $G$, $\ch_T(G)$, is the smallest integer $c$ such that, for any assignment of colour lists to the vertices and edges of $G$ in which each list has length at least $c$, there exists a proper list total colouring of $G$.  We define the following problem.
\\

\hangindent=1cm
\textsc{List Total Chromatic Number} \\
\textit{Input:} A graph $G = (V,E)$.\\
\textit{Question:} What is $\ch_T(G)$?\\

We also consider a list version of Hamilton Path, introduced in \cite{fellows11}, where each vertex has a list of permissible positions on the path.\\

\hangindent=1cm
\textsc{List Hamilton Path} \\
\textit{Input:} A graph $G = (V,E)$, and a set of lists $\mathcal{L} = \{L_v \subseteq \{1,\ldots,|V|\}: v \in V\}$ of permitted positions. \\
\textit{Question:} Does there exist a path $P = P[1]\ldots P[|G|]$ in $G$ such that, for $1 \leq i \leq |G|$, we have $i \in L_{P[i]}$?
\\

This introduction of lists to the Hamilton path problem is perhaps most naturally interpreted as the addition of timing constraints; we illustrate this with the Itinerant Lecturer Problem, which we define as follows.  A professor has recently proved an exciting new result, and plans to spend $n$ days travelling the country, giving a lecture about his work at a different university each day.  We represent each institution he plans to visit with a vertex; each pair of institutions which can potentially be visited on consecutive days (the distance not being too great) is connected by an edge.  Each institution also places different restrictions on which days he can visit (perhaps they require that he comes on the correct day of the week to speak at their regular seminar, or it might be that his host is away for part of the month) which give rise to the list of permitted positions for each vertex.  The Itinerant Lecturer is then seeking to find a solution to the \textsc{List Hamilton Path} problem for this input.

\subsection{Treewidth and Pathwidth}

We consider the complexity of these problems restricted to graphs of bounded treewidth or pathwidth.  Given a graph $G$, we say that $(T,\mathcal{D})$ is a \emph{tree decomposition} of $G$ if $T$ is a tree and $\mathcal{D} = \{\mathcal{D}(t): t \in T\}$ is a collection of non-empty subsets of $V(G)$ (or \emph{bags}), indexed by the nodes of $T$, satisfying:
\begin{enumerate}
\item $V(G) = \bigcup_{t \in T} \mathcal{D}(t)$,
\item for every $e=uv \in E(G)$, there exists $t \in T$ such that $u,v \in \mathcal{D}(t)$,
\item for every $v \in V$, if $T(v)$ is defined to be the subgraph of $T$ induced by nodes $t$ with $v \in \mathcal{D}(t)$, then $T(v)$ is connected.
\end{enumerate}
The \emph{width} of the tree decomposition $(T,\mathcal{D})$ is defined to be $\max_{t \in T} |\mathcal{D}(t)| - 1$, and the \emph{treewidth} of $G$ is the minimum width over all tree decompositions of $G$.  A \emph{path decomposition} is a tree decomposition $(P,\mathcal{D})$ in which the indexing tree is a path, and the \emph{pathwidth} of $G$ is the minimum width over all path decompositions of $G$.

Given a tree decomposition $(T,\mathcal{D})$ of $G$, we assume that an arbitrary node $r \in V(T)$ is chosen to be the \emph{root} of $T$, and define the \emph{height}, $h(t)$ of any $t \in T$ to be the distance from $r$ to $t$.  For any $v \in V(G)$, we then define $t_v$ to be the unique node $t$ of minimum height such that $v \in \mathcal{D}(t)$ (i.e.~$t_v$ is the node of minimal height in the subtree $T(v)$).

We will make use of the following well-known bound on the number of nodes required in the tree indexing the tree decomposition of a graph $G$ (see \cite[Lemma 11.9]{flumgrohe} for a proof); we shall assume throughout that tree decompositions of this form are given.

\begin{lma}
Let $G$ be a graph of order $n$ and treewidth $k$.  Then there exists a width $k$ tree decomposition $(T,\mathcal{D})$ for $G$ with $|T| \leq n$.  Moreover, given any tree decomposition $(T',\mathcal{D}')$ for $G$, a tree decomposition $(T,\mathcal{D})$ such that $|T| \leq n$ can be computed in linear time.
\label{bounded-tree}
\end{lma}

Graphs having treewidth at most $k$ can alternatively be characterised as \emph{partial $k$-trees}, as in \cite{arnborg89,zhou96}.  It follows immediately from this equivalent definition that if a graph $G$ of order $n$ is a partial $k$-tree (i.e.~it has treewidth at most $k$) then $G$ has at most $kn$ edges.

\subsection{Parameterised complexity }

When considering the class of problems which are solvable in polynomial time on graphs of treewidth at most $k$, we aim, as mentioned above, to distinguish those which can be solved in time $f(k)\cdot n^{O(1)}$.  Problems in this subclass are said to be \emph{fixed parameter tractable}, parameterised by treewidth (and belong to the parameterised complexity class FPT).  A standard method of showing that a parameterised problem does \emph{not} belong to FPT (and so the best known algorithm has running time $n^{f(k)}$ for some unbounded function $f$) is to prove that it is hard for some level of the W-hierarchy; in Section \ref{ListHamPath} we show that \textsc{List Hamilton Path} is hard for W[1], the first level of this hierarchy.

In order to show that a problem is W[1]-hard, we give a parameterised reduction from a problem that is known to be W[1]-hard.  Given a W[1]-hard decision problem $\Pi$ with parameter $k$, this involves constructing an instance $(I',k')$ of $\Pi'$ such that $(I',k')$ is a yes-instance for $\Pi'$ if and only if $(I,k)$ is a yes-instance for $\Pi$; for a parameterised reduction we require that $k'$ is bounded by some function of $k$, in addition to the fact that $(I',k')$ can be computed from $(I,k)$ in time polynomial in $|I|$ and that $|I'|$ is bounded by a polynomial function of $|I|$.

One useful W[1]-hard problem, which we use for a reduction in Section~\ref{ListHamPath}, is $\textsc{Multicolour Clique}$ (shown to be W[1]-hard by Fellows, Hermelin, Rosamond and Vialette in \cite{fellows09}): given a graph $G$, properly coloured with $k$ colours, the problem is to determine whether there exists a clique in $G$ containing one vertex of each colour. 

In fact, by considering the relationship between the parameters $k$ and $k'$ in the parameterised reduction, it is sometimes possible to prove conditional lower bounds on the running time of any algorithm for a specific parameterised problem.  Assuming the exponential time hypothesis, a lower bound on the running time for any algorithm to solve the \textsc{Clique} problem is known.

\begin{thm}[\cite{chen05,lokshtanov13}]
Assuming the exponential time hypothesis, there is no $f(k)n^{o(k)}$ algorithm for \textsc{Clique}.
\label{eth-clique}
\end{thm}

Since the reduction from \textsc{Clique} to \textsc{Multicolour Clique} in \cite{fellows09} does not change the value of the parameter, the same is true for \textsc{Multicolour Clique}; our parameterised reduction to \textsc{List Hamilton Path} in Section \ref{ListHamPath} only increases the parameter value by a constant factor, so we are able to deduce that the same conditional lower bound also holds for \textsc{List Hamilton Path}.
 
For further background on the theory of parameterised complexity, we refer the reader to \cite{downey13}.

\section{List Chromatic Number Problems}
\label{EdgeChromaticNumber}

The main result of this section is the following theorem.
\begin{thm}
\textsc{List Edge Chromatic Number} and \textsc{List Total Chromatic Number} are fixed parameter tractable, parameterised by the treewidth bound $k$, and are solvable in linear time for any fixed $k$.
\label{fpt}
\end{thm}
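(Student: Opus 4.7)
The plan is to extend the approach of \cite{fellows11} (which gave Theorem \ref{chi,chi_l}) from vertex list-colouring to edge and total list-colouring. By Lemma \ref{bounded-tree}, we may assume we are given a tree decomposition $(T, \mathcal{D})$ of width $k$ and size $O(n)$. The strategy is a dynamic programming algorithm on $(T, \mathcal{D})$ that, at each bag, maintains a signature summarising, for the subgraph processed so far, the extendability of proper list colourings for every relevant list assignment on the remainder. The answer $\ch'(G)$ (resp.~$\ch_T(G)$) is then read off from the root signature.

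The central difficulty is the universal quantifier over list assignments, since list sizes may in principle require arbitrarily many colours. Following \cite{fellows11}, we observe that for the existence of a proper list colouring, only the \emph{intersection pattern} of lists matters, not the colour identities themselves. Restricted to a single bag containing at most $\binom{k+1}{2}$ edges, the number of intersection patterns of lists of length $c$ is bounded by a function of $k$ and $c$. The signature at bag $t$ therefore records, for each equivalence class of list patterns on boundary edges, whether a partial proper list colouring of the subgraph below $t$ can be extended to be compatible with that class. Transitions at introduce-, forget-, and join-nodes combine child signatures by enumerating over newly introduced bag elements and local list patterns, so each transition takes time depending only on $k$ and $c$.

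For \textsc{List Edge Chromatic Number}, the proper-colouring constraint is non-incidence of same-coloured edges; for \textsc{List Total Chromatic Number}, the signature additionally tracks list patterns on bag vertices, subject to the constraint that vertex-vertex, edge-edge, and vertex-edge incidences do not share colours. To determine the actual value of the list chromatic number, the root signature is interrogated for the smallest length $c$ such that every list pattern admits a proper list colouring; combined with the bound $\ch'(G) \leq 2\Delta(G) - 1$, this can be computed in a single sweep whose details handle the dependence on $c$ carefully to achieve linear time for any fixed~$k$.

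The main obstacle is proving the validity of the intersection-pattern abstraction: that two list assignments on the whole graph with equivalent patterns at every bag yield identical extendability questions. This requires showing, as in \cite{fellows11}, that colours appearing only within a single subtree of the tree decomposition can be relabelled without affecting global list-colourability. Once this is established, the bounded signature size yields the FPT result; achieving linear time for any fixed $k$ requires additionally bounding the effective range of $c$ and organising the DP so that all values of interest are handled in a single pass.
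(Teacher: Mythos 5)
Your proposal misses the central difficulty of the problem and therefore has a real gap. The paper's own discussion flags exactly the obstruction your DP runs into: although the existence result (Theorem \ref{chi,chi_l}) reduces \textsc{List Edge Chromatic Number} to \textsc{List Chromatic Number} on $L(G)$ and \textsc{List Total Chromatic Number} to the same on $T(G)$, the treewidth of $L(G)$ and $T(G)$ can be arbitrarily large even when $G$ has treewidth $2$. Equivalently, if you try to do the DP directly on a tree decomposition of $G$, the boundary information you must carry is not just the $\binom{k+1}{2}$ intra-bag edges: a bag vertex $v$ can have unbounded degree, and the colours already committed on the (unboundedly many) edges incident with $v$ that lie below the current bag all constrain future choices. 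There is no bound on this state in terms of $k$ alone, so the ``bounded signature size'' you invoke does not exist. Likewise, the universal quantifier over list sizes $c$ up to $\ch'(G)\le 2\Delta(G)-1$ is over a range growing with $\Delta$, not $k$, and your final sentence acknowledging that one must ``bound the effective range of $c$'' is precisely where the real work lies and is not supplied.

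The paper's route is quite different and is what closes this gap: it first proves the structural results Theorems \ref{chi'=ch'=delta} and \ref{ch''=delta+1}, showing that if $\Delta(G)\ge (k+2)2^{k+2}$ then $\ch'(G)=\Delta(G)$ and $\ch_T(G)=\Delta(G)+1$, so the answer is determined without any search. In the remaining case $\Delta(G)<(k+2)2^{k+2}$, Lemmas \ref{line-treewidth} and \ref{total-treewidth} bound the treewidth of $L(G)$ and $T(G)$ by a function of $k$, and Theorem \ref{chi,chi_l} (applied to $L(G)$ or $T(G)$ after running Bodlaender's algorithm) finishes the job. You need an analogue of this dichotomy -- some argument that reduces to bounded maximum degree -- before any bag-based DP or auxiliary-graph reduction can be made to work; without it, your signature is unbounded and the proof does not go through.
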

The key technical tool we use to prove this result is Theorem \ref{list-bound}, which determines the list edge chromatic number and list total chromatic number for graphs with fixed treewidth and large maximal degree.

\subsection{Background}

We begin by recalling some existing results about the edge chromatic number and list edge chromatic number of a graph.  It is easy to see that, for any graph $G$, we have
$$\Delta(G) \leq \chi'(G) \leq \ch'(G) \leq 2 \Delta(G) - 1,$$
where $\chi'(G)$ denotes the edge chromatic number of $G$ and $\Delta(G)$ is the maximum degree of $G$.  The lower bound comes from the fact that every edge incident with a single vertex must receive a different colour; for the upper bound, note that every edge is incident with at most $2(\Delta(G) - 1)$ others, so lists of length $2 \Delta(G) - 1$ guarantee that we can colour greedily.  For the edge chromatic number, we have the much stronger result of Vizing:
\begin{thm}[\cite{vizing64}]
$\chi'(G)$ is equal to either $\Delta(G)$ or $\Delta(G) + 1$.
\end{thm}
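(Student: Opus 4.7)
The plan is to prove the two inequalities separately. The lower bound $\chi'(G) \geq \Delta(G)$ is immediate: at any vertex $v$ of degree $\Delta(G)$, the $\Delta(G)$ edges incident to $v$ are pairwise incident and must therefore receive distinct colours in any proper edge-colouring. The substance of the theorem is the upper bound $\chi'(G) \leq \Delta(G) + 1$, and I would prove this by induction on $|E(G)|$, with the empty graph as a trivial base case.

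For the induction step, pick an arbitrary edge $e = uv$; by the inductive hypothesis $G - e$ admits a proper edge-colouring $c$ using the palette $\{1, \ldots, \Delta(G)+1\}$. Since every vertex of $G$ has degree at most $\Delta(G)$, at each vertex $w$ at least one palette colour is \emph{missing} from the edges incident to $w$ in $G - e$. If some colour is missing at both $u$ and $v$ simultaneously, we assign that colour to $e$ and we are done.

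The heart of the argument handles the case where no single colour is missing at both endpoints. Here I would construct a \emph{Vizing fan} at $u$: a maximal sequence $v = v_1, v_2, \ldots, v_s$ of distinct neighbours of $u$ such that, for each $1 \leq i < s$, the colour $c(uv_{i+1})$ is missing at $v_i$. Let $\alpha$ be a colour missing at $u$, and for each $i$ let $\beta_i$ be a colour missing at $v_i$; maximality of the fan together with the failure of the previous case forces each $\beta_i$ to appear on some edge incident to $u$ (otherwise rotating the colours along the prefix $uv_1, \ldots, uv_i$ would free $\alpha$ for $e$). Now consider the Kempe chain $P$ starting at $u$ consisting of the maximal path of edges alternately coloured $\alpha$ and $\beta_s$, and swap the two colours along $P$. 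A case analysis on where $P$ terminates — either at a vertex outside the fan, at $v_s$, or at some $v_j$ with $j < s$ — shows that after the swap a suitable prefix of the fan can be rotated so that $\alpha$ becomes missing at $u$ and at the relevant fan vertex simultaneously, allowing $e$ to be coloured $\alpha$.

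The main obstacle is the final case analysis on the Kempe chain. One must verify in each case that swapping $\alpha$ and $\beta_s$ along $P$ leaves $c$ a proper edge-colouring of $G - e$, that the missing-colour invariants at the fan vertices used in the subsequent rotation are preserved, and that the rotated prefix of the fan really does free a common colour at both endpoints of $e$. The delicate point is ruling out that $P$ terminates in a way that re-creates a colour conflict at $u$ — for example, at $v_s$ via the edge coloured $\beta_s$ — and this careful bookkeeping about the interaction between the fan structure and the Kempe swap is the crux of Vizing's original argument.
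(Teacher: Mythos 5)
The paper does not contain a proof of this statement: it is quoted as a known result and attributed to Vizing's 1964 paper, so there is no in-paper argument for your proposal to be measured against. What you have written is the standard Vizing-fan proof, and at the level of a plan it is sensible; the two genuine gaps are in the details you defer. First, and most significantly, the entire case analysis on where the $\alpha$-$\beta$ Kempe chain terminates is the theorem --- you have correctly identified it as ``the crux'' but have not carried it out, so the proposal as written establishes nothing beyond a roadmap. Second, starting the alternating chain at $u$ is a nonstandard choice and actively complicates the bookkeeping you are trying to avoid: after swapping $\alpha$ and $\beta_s$ along the chain from $u$, the edge $uv_j$ (the unique $\beta_s$-edge at $u$) changes colour to $\alpha$, which alters the fan itself and may destroy the missing-colour invariants at the fan vertices that the subsequent rotation needs. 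The usual proofs instead take the Kempe chain rooted at $v_{j-1}$ or at $v_s$, precisely because $\alpha$ is missing at $u$ and one wants $u$ to lie on at most one of these chains, so that at least one swap leaves the colouring at $u$ and the fan structure intact. If you intend to keep the chain-from-$u$ variant you would need to argue separately that the fan can be rebuilt after the swap, which is more delicate than the standard route.
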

In order to give a linear time algorithm to solve \textsc{Edge Chromatic Number} on graphs of bounded treewidth, Zhou, Nakano and Nishizeki \cite{zhou96} prove that, for graphs of fixed treewidth $k$ and maximum degree $\Delta \geq 2k$, the edge chromatic number must in fact be equal to $\Delta$.

There is no direct analogue of Vizing's theorem for the list edge chromatic number.  The \emph{List (Edge) Colouring Conjecture} (discussed in \cite{alon93,bollobas85,haggkvist92}) asserts that $\chi'(G) = \ch'(G)$ for any graph $G$, and would immediately imply Vizing's conjecture (1976) that $\ch'(G) \leq \Delta(G) + 1$.  However, neither of these conjectures has been proved except for certain special classes of graphs, and the best general bound on the list edge chromatic number is due to Kahn.
\begin{thm}[\cite{Kahn96}]
For any $\epsilon > 0$, if $\Delta(G)$ is sufficiently large,
$$\ch'(G) \leq (1+\epsilon)\Delta(G).$$
\end{thm}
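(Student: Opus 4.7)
The plan is to deploy the semi-random (``R\"odl nibble'') method on the line graph $L(G)$, where the task reduces to list vertex colouring on a graph of maximum degree at most $2(\Delta-1)$. Fix lists $L_e$ of size $q := \lceil (1+\epsilon)\Delta \rceil$ for every edge, and for each still-uncoloured edge $e$ track two quantities throughout the process: the current list size $\ell_e := |L_e|$, and the \emph{colour-degree} $d_{e,c} := |\{e' \sim e : c \in L_{e'}\}|$ for each $c$ surviving in $L_e$. Initially $\ell_e / d_{e,c} \geq (1+\epsilon)/2$, and the goal is to preserve a comparable ratio throughout the process so that a greedy completion succeeds at the end.

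In each round, every uncoloured edge independently proposes each of its surviving colours with probability $p/\ell_e$ for a small parameter $p$; a proposal $(e,c)$ is \emph{accepted} if no adjacent edge proposes $c$. Accepted edges are coloured and removed, and the colour is also deleted from the lists of all neighbouring edges. A first-moment calculation yields $\mathbb{E}[\ell_e^{\mathrm{new}}] \approx e^{-p}\ell_e$ and $\mathbb{E}[d_{e,c}^{\mathrm{new}}] \approx e^{-p} d_{e,c}$, so the crucial ratio is preserved in expectation. Iterating for $T = \Theta(\log \Delta)$ rounds drives the maximum colour-degree down to $O(\Delta^{1-\delta})$ while keeping every surviving list of size at least $(1+\epsilon/2)$ times this quantity, at which point a straightforward greedy argument finishes the proper list edge colouring.

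The main obstacle is establishing concentration of $\ell_e$ and $d_{e,c}$ around these expected values at every round. Both quantities are functions of independent edge proposals, but they are only moderately Lipschitz, so one needs a certifiable version of Talagrand's inequality to obtain deviation bounds of the form $\exp(-\Omega(\Delta^\alpha))$; such bounds are strong enough to survive a union bound over the $O(|E| \cdot q)$ token-indexed events and the $O(\log \Delta)$ rounds. A subtlety specific to edge colouring is that two edges sharing a common vertex may both contain $c$ on their lists, and the nibble parameter $p$ must be chosen small enough that these local correlations do not destroy concentration, yet large enough that the whole process terminates in $O(\log \Delta)$ rounds.
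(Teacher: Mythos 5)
The paper does not prove this statement: it is quoted directly from Kahn and used as a black box, so there is no in-paper proof to compare against. Taking your sketch on its own terms, the broad plan --- an iterative semi-random (nibble) process on the line graph, tracking surviving list sizes and colour-degrees, followed by a completion step --- is the right circle of ideas. But two specific points would sink the argument as written.

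\emph{The first-moment heuristic, as stated, is self-defeating.} You compute $\mathbb{E}[\ell_e^{\mathrm{new}}] \approx e^{-p}\ell_e$ and $\mathbb{E}[d_{e,c}^{\mathrm{new}}] \approx e^{-p} d_{e,c}$ and conclude that the ratio $\ell_e/d_{e,c}$ is ``preserved in expectation.'' But you started from $\ell_e/d_{e,c} \geq (1+\epsilon)/2$, which is \emph{less than} $1$ for $\epsilon < 1$. If this ratio is only preserved, then lists never overtake colour-degrees, concentration is lost once both quantities are small, and no completion argument can fire. The actual engine of the nibble for edge colouring is that the colour-degree shrinks strictly faster than the list: a colour $c$ leaves $L_e$ only when some adjacent edge accepts $c$, whereas $d_{e,c}$ also loses contributions from adjacent edges that get coloured with \emph{any} colour and from colours deleted at those edges. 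This asymmetry makes the ratio grow by a constant factor per round, which after $\Theta(\log \Delta)$ rounds pushes it comfortably above $1$. Your sketch omits exactly the mechanism that makes the process converge, and with the numbers as claimed the argument cannot close.

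\emph{The final completion is not ``straightforward greedy.''} Even once every surviving pair satisfies $\ell_e \geq 2\, d_{e,c}$, the degree of $e$ in the uncoloured part of $L(G)$ can still be $\Omega(\Delta) \gg \ell_e$, so a naive sequential greedy can exhaust a list. The standard completion is a Lov\'asz Local Lemma argument (each edge picks a uniform colour from its residual list; a bad event is an adjacent monochromatic pair, with probability $O(1/\ell^2)$ and dependency degree $O(\ell\, d)$, so LLL applies once $d/\ell$ is a small constant), or a dedicated finishing lemma of this kind. This step needs to be argued, not asserted. I'd also flag that Kahn's own proof is less elementary than a plain nibble --- he controls the process through hard-core distributions on matchings --- though a plain nibble with the corrected ratio analysis does suffice for the statement quoted here.
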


In Theorem \ref{list-bound}, we show that for a graph $G$ of bounded treewidth and large maximum degree, 
$$\ch'(G) = \chi'(G) = \Delta(G),$$
proving a special case of the List (Edge) Colouring Conjecture.  Using this result, the \textsc{List Edge Chromatic Number} problem on graphs of bounded treewidth can be reduced to the case in which the maximum degree of the graph is bounded.

We prove an analogous result for the list total chromatic number.  Once again, there exist trivial bounds for the (list) total chromatic number of an arbitrary graph:
$$\Delta(G) + 1 \leq \chi_T(G) \leq \ch_T(G) \leq 2 \Delta(G) + 1,$$
where $\chi_T(G)$ denotes the total chromatic number of the graph.  It is a long-standing but unproved conjecture (the \emph{Total Colouring Conjecture} \cite{behzad65,vizing68}) that
$$\chi_T(G) \leq \Delta(G) + 2.$$
Theorem \ref{list-bound} further demonstrates that, for a graph $G$ of bounded treewidth and large maximum degree,
$$\ch_T(G) = \chi_T(G) = \Delta(G) + 1,$$
and so once again it suffices to solve the problem for graphs with bounded maximum degree.

Of course, there is a correspondence between these colouring problems and the vertex-colouring problems discussed above.  For any graph $G = (V,E)$, the \emph{line graph} $L(G)$ of $G$ has vertex set $E$, and $e,f \in E$ are adjacent in $L(G)$ if and only if $e$ and $f$ are incident in $G$.  Then solving (for example) \textsc{List Edge Chromatic Number} for the graph $G$ is equivalent to solving \textsc{List Chromatic Number} for $L(G)$.  Similarly, we define the \emph{total graph} $T(G)$ of $G$ to have vertex set $V \cup E$, and edge set $E \cup \{ef: e,f \in E \text{ and } e,f \text{ incident in } G\} \cup \{ve: v \text{ is an endpoint of } e\}$, and solving \textsc{List Total Chromatic Number} for $G$ is then equivalent to solving \textsc{List Chromatic Number} for $T(G)$.  However, as the treewidth of $L(G)$ or $T(G)$ can in general be arbitrarily large even when $G$ itself has small treewidth, results about the parameterised complexity of vertex colouring problems do not immediately transfer to the edge and total colouring cases.

If the maximum degree of $G$ is bounded, however, the following result (proved in \cite{calinescu03}) tells us that the treewidth of $L(G)$ is bounded by a constant multiple of that of $G$.
\begin{lma}
Let $G$ be a graph of treewidth at most $k$, and maximum degree at most $\Delta$.  Then $L(G)$ has treewidth at most $(k+1)\Delta$.
\label{line-treewidth}
\end{lma}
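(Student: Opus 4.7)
The plan is to take a width-$k$ tree decomposition $(T,\mathcal{D})$ of $G$ and lift it directly to a tree decomposition of $L(G)$ over the same tree $T$, where each bag absorbs the edges incident to its original vertices. Specifically, I would define
\[
\mathcal{D}'(t) = \{e \in E(G) : e \cap \mathcal{D}(t) \neq \emptyset\},
\]
so each vertex $v \in \mathcal{D}(t)$ contributes its (at most $\Delta$) incident edges to the new bag. This immediately gives $|\mathcal{D}'(t)| \leq |\mathcal{D}(t)| \cdot \Delta \leq (k+1)\Delta$, so the width of $(T,\mathcal{D}')$ is at most $(k+1)\Delta - 1$, which is within the stated bound.

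Next I would verify the three tree-decomposition axioms for $(T,\mathcal{D}')$ viewed as a decomposition of $L(G)$ (whose vertices are the edges of $G$). For axiom (1), every $e = uv \in E(G)$ lies in $\mathcal{D}'(t)$ for any bag with $u \in \mathcal{D}(t)$, and such a bag exists by axiom (1) for $(T,\mathcal{D})$. For axiom (2), if $e$ and $f$ are incident in $G$ at a common vertex $v$, then any bag $\mathcal{D}(t)$ containing $v$ satisfies $\{e,f\} \subseteq \mathcal{D}'(t)$.

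The main point to check carefully is axiom (3), the connectivity of the subtree of bags containing a given edge $e = uv \in E(G)$. By construction, $e \in \mathcal{D}'(t)$ precisely when $u \in \mathcal{D}(t)$ or $v \in \mathcal{D}(t)$, so the relevant subgraph of $T$ is $T(u) \cup T(v)$. Each of $T(u)$ and $T(v)$ is connected by axiom (3) for $(T,\mathcal{D})$, so their union is connected provided they intersect. Since $uv \in E(G)$, axiom (2) for $(T,\mathcal{D})$ gives a bag containing both $u$ and $v$, which witnesses the required intersection. This is the only subtle step; once it is in place, the lemma follows at once.
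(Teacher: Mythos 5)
Your proof is correct. The paper itself does not prove Lemma~\ref{line-treewidth}, instead citing \cite{calinescu03}, but your construction is precisely the one the paper uses in its proof of the companion Lemma~\ref{total-treewidth} (there taking $\mathcal{D}'(t) = \mathcal{D}(t) \cup \{uv \in E: \{u,v\} \cap \mathcal{D}(t) \neq \emptyset\}$), and your verification of the three axioms, including the connectivity argument via $T(u) \cup T(v)$ meeting at the bag witnessing the edge $uv$, is the intended one.
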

A similar result holds for the treewidth of $T(G)$.
\begin{lma}
Let $G$ be a graph of treewidth at most $k$, and maximum degree at most $\Delta$.  Then $T(G)$ has treewidth at most $(k+1)(\Delta+1)$.
\label{total-treewidth}
\end{lma}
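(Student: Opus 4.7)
The plan is to mimic the construction that presumably underlies Lemma \ref{line-treewidth} for the line graph, but enlarge each bag to carry both the vertices of $G$ and the incident edges. Specifically, starting from a tree decomposition $(T,\mathcal{D})$ of $G$ of width at most $k$, I would define a new family of bags $\mathcal{D}'$ indexed by the same tree $T$ by setting
\[
  \mathcal{D}'(t) = \mathcal{D}(t) \;\cup\; \bigl\{\, e \in E(G) : e \text{ has at least one endpoint in } \mathcal{D}(t) \,\bigr\}.
\]
The claim is that $(T,\mathcal{D}')$ is a tree decomposition of $T(G)$ of the required width.

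The size bound is immediate: $|\mathcal{D}(t)| \le k+1$, and each of those $k+1$ vertices contributes at most $\Delta$ incident edges, so
\[
  |\mathcal{D}'(t)| \le (k+1) + (k+1)\Delta = (k+1)(\Delta+1),
\]
yielding width at most $(k+1)(\Delta+1)-1 \le (k+1)(\Delta+1)$.

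Next I would verify the three tree-decomposition axioms for $(T,\mathcal{D}')$ against the vertex set $V \cup E$ and edge set of $T(G)$. Coverage of vertices is trivial for $V$ (inherited from $\mathcal{D}$) and holds for an edge $uv \in E$ because some bag of $\mathcal{D}$ contains both endpoints, hence contains $uv$ in $\mathcal{D}'$. For the edges of $T(G)$ I would handle the three types in turn: an edge of $G$ lies in the bag containing both of its endpoints; an incidence edge $ve$ with $e = vw$ is covered by any bag of $\mathcal{D}$ containing $v$ and $w$ (which contains both $v$ and $e$ in $\mathcal{D}'$); and an adjacency $ef$ between incident edges sharing a vertex $v$ is covered similarly, since any bag containing $v$ automatically contains both $e$ and $f$ in $\mathcal{D}'$.

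The step requiring slightly more care, and the one I would flag as the main (but still routine) obstacle, is connectivity: for each $x \in V(T(G))$ the set $T(x) = \{t : x \in \mathcal{D}'(t)\}$ must induce a subtree. For $v \in V$ this is inherited. For $e = uv \in E$, the set $T(e)$ is exactly the union $T(u) \cup T(v)$ of the (connected) subtrees for $u$ and $v$ in the original decomposition; since some bag of $\mathcal{D}$ contains both $u$ and $v$, these two subtrees intersect, and their union is therefore connected. Once these axioms are verified, the bag-size calculation above gives the desired treewidth bound.
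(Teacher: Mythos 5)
Your construction $\mathcal{D}'(t)=\mathcal{D}(t)\cup\{e\in E(G):e\cap\mathcal{D}(t)\neq\emptyset\}$ is exactly the one the paper uses, and the bag-size computation and verification of the tree-decomposition axioms (which the paper leaves as ``easy to verify'') are all correct. Your proof is therefore the same argument as the paper's, with the routine details spelled out.
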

\begin{proof}
If $(T,\mathcal{D})$ is a width $k$ tree decomposition for $G$, it is easy to verify that $(T,\mathcal{D}')$, where $\mathcal{D}'(t) = \mathcal{D}(t) \cup \{uv \in E: \{u,v\} \cap \mathcal{D}(t) \neq \emptyset\}$, is a tree decomposition of $T(G)$ of width at most $(k+1)(\Delta+1)$.
\end{proof}

We will need two further results for our proofs.  First, a theorem of Galvin concerning the list edge chromatic number of bipartite graphs: 
\begin{thm}[\cite{galvin95}]
Let $G$ be a bipartite graph.  Then 
$$\ch'(G) = \chi'(G) = \Delta(G).$$
\label{bipartite}
\end{thm}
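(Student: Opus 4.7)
The chain $\Delta(G) \leq \chi'(G) \leq \ch'(G)$ is immediate, and the upper bound $\chi'(G) \leq \Delta(G)$ is K\"onig's edge-colouring theorem for bipartite graphs. So the crux is to establish $\ch'(G) \leq \Delta(G) =: k$, i.e.\ that any assignment of lists of length $k$ to the edges of $G$ admits a proper list edge-colouring.

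My plan is to invoke the \emph{kernel method} of Bondy, Boppana and Siegel, which says that if $D$ is a digraph on vertex set $W$, lists $L_w$ satisfy $|L_w| \geq d^+_D(w) + 1$ for all $w \in W$, and every induced subdigraph of $D$ contains a \emph{kernel} (an independent set $K$ such that every vertex outside $K$ has an out-neighbour in $K$), then $(D,\{L_w\})$ admits a proper list colouring. Taking $D$ to be an orientation of the line graph $L(G)$, I would need (i) all out-degrees at most $k-1$ and (ii) the kernel-existence property for every induced subdigraph.

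The orientation is defined using a K\"onig colouring. Let $c$ be a proper $k$-edge-colouring of $G$, guaranteed by K\"onig, and let $(A,B)$ be the bipartition. For two edges $e,f$ of $G$ sharing an endpoint $v$, orient the corresponding edge of $L(G)$ from $e$ to $f$ if either $v \in A$ and $c(e) < c(f)$, or $v \in B$ and $c(e) > c(f)$. A vertex $e = ab$ of $L(G)$ with $a \in A$, $b \in B$ then has out-neighbours at $a$ only with larger colour and at $b$ only with smaller colour, giving out-degree at most $(k - c(e)) + (c(e) - 1) = k-1$, as required.

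The main obstacle is verifying the kernel property for every induced subdigraph. An induced subdigraph of $D$ corresponds to a subset $F \subseteq E(G)$, which is itself a bipartite subgraph $H$ with bipartition inherited from $G$. I would construct a kernel as the output of a Gale--Shapley stable matching procedure applied to $H$, using the preferences: each $a \in A$ ranks its incident $H$-edges by decreasing $c$-value, and each $b \in B$ ranks its incident edges by increasing $c$-value. A stable matching $M$ is one in which, for every edge $e \in F \setminus M$, some endpoint of $e$ is incident in $M$ to an edge it prefers to $e$. Translating back, $M$ is an independent set in $L(H)$, and the preference-dominance condition is precisely the statement that each vertex of $L(H) \setminus M$ has an out-neighbour in $M$ under our orientation. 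Existence of stable matchings (Gale--Shapley) thus yields the required kernel, completing the argument.
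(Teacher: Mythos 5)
The paper does not prove this statement; it simply cites Galvin's theorem \cite{galvin95} and uses it as a black box in the proofs of Theorems \ref{chi'=ch'=delta} and \ref{ch''=delta+1}. Your proposal is a correct reconstruction of Galvin's original argument: the Bondy--Boppana--Siegel kernel lemma applied to an orientation of $L(G)$ built from a K\"onig colouring, with kernels supplied by Gale--Shapley stable matchings under the ``$A$ prefers high colours, $B$ prefers low colours'' preferences. The out-degree bound $(k-c(e)) + (c(e)-1) = k-1$ and the translation between stability (no blocking edge, allowing unmatched vertices) and the absorbing property of a kernel are both handled correctly. The only point worth flagging for a fully rigorous write-up is that one needs the version of Gale--Shapley for incomplete preference lists, where a stable matching may leave vertices unmatched but still admits no blocking edge; this exists and is standard, and your statement of the stability condition is the right one for this setting.
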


Finally, we make use of an algorithm of Bodlaender:
\begin{thm}[\cite{bodlaender96}]
For all $k \in \mathbb{N}$, there exists a linear-time algorithm that tests whether a given graph $G = (V,E)$ has treewidth at most $k$ and, if so, outputs a tree-decomposition of $G$ with treewidth at most $k$.
\label{bod-linear-decomp}
\end{thm}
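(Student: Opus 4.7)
The plan is to design a recursive algorithm $A_k$ that, on input $G$ of order $n$, returns either a tree decomposition of $G$ of width at most $k$ or a certificate that the treewidth of $G$ exceeds $k$, in time $O(n)$. The recursion will satisfy a recurrence of the form $T(n) \leq T((1-\alpha)n) + O(n)$ for some constant $\alpha = \alpha(k) > 0$, which solves to $T(n) = O(n)$.

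First I would establish the structural lemma that any graph of treewidth at most $k$ contains an explicit linear-sized set $S$ of vertices that can be ``safely eliminated'': concretely, an independent set whose members each have degree at most some constant $d = d(k)$. Such a set exists in $\Omega(n)$ size because, as noted earlier in the paper, graphs of treewidth at most $k$ have at most $kn$ edges and hence average degree at most $2k$, so a constant fraction of vertices have degree at most $4k$; a simple greedy procedure extracts a linear-sized independent set among these low-degree vertices in linear time using bucketing by degree.

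Next, one forms a reduced graph $G'$ by deleting $S$ and, for each $v \in S$, adding clique edges on $N(v)$ (simulating a simplicial-style elimination); this preserves the property of having treewidth at most $k$ and reduces the order by a constant fraction. The recursive call on $G'$, together with a procedure that reinserts each $v \in S$ into a new bag attached to any bag containing $N(v)$, then yields a tree decomposition of the original $G$. The width may overshoot $k$ by an additive constant during reinsertion, so at the end one applies a fast refinement procedure -- the Bodlaender--Kloks dynamic programming algorithm -- which, in linear time, either converts a given tree decomposition of width $O(k)$ into one of width at most $k$ or correctly reports that none exists.

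The principal obstacle will be ensuring that the total running time stays genuinely linear rather than $O(n \log n)$. Since each recursion level performs only $O(n_i)$ work on a graph of size $n_i \leq (1-\alpha)n_{i-1}$, the geometric sum does give $O(n)$ in the end; however, implementing the greedy extraction of $S$, the clique-completion bookkeeping, and the refinement step each in truly linear time requires careful amortised data-structure work, and one must verify that edge additions during elimination never blow the edge count beyond $O(k n_i)$ at any level of the recursion, so that the ``degree at most $4k$'' supply of reducible vertices is replenished at each stage.
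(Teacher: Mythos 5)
The theorem you are being asked about is cited in the paper from Bodlaender (1996) and not proved there, so there is no in-paper argument to compare against; what matters is whether your sketch is sound as a standalone proof, and it is not.

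The fatal step is the reduction: ``one forms a reduced graph $G'$ by deleting $S$ and, for each $v \in S$, adding clique edges on $N(v)$ \ldots\ this preserves the property of having treewidth at most $k$.'' That claim is false. Clique-completing the neighbourhood of an arbitrary low-degree vertex and deleting it can strictly increase treewidth. Take $k=1$ and let $G$ be the caterpillar with spine $p_1 p_2 p_3$ and a pendant leaf $\ell_i$ attached to each $p_i$. Then $G$ is a tree, $\mathrm{tw}(G)=1$, and $p_2$ has degree $3 \le 4k$, so it is eligible for inclusion in $S$. Eliminating $p_2$ turns $N(p_2)=\{p_1,p_3,\ell_2\}$ into a triangle, so $G'$ contains $K_3$ and $\mathrm{tw}(G') \ge 2$. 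The recursion on $G'$ with the same bound $k=1$ would then (correctly, for $G'$) report failure, and your algorithm would falsely conclude that $G$ has treewidth $>1$. More generally, eliminating a vertex preserves ``treewidth $\le k$'' only when the vertex is simplicial in some witnessing triangulation; you cannot force this by clique-completing an arbitrary independent set of low-degree vertices. A careful pre-sort by degree does not rescue this: one cannot in general find a linear-size independent set consisting solely of vertices that are simultaneously low-degree \emph{and} safely eliminable.

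This is precisely the difficulty that makes Bodlaender's actual algorithm intricate. His reduction does \emph{not} add edges at all: when a large matching $M$ exists he contracts $M$, which is a minor operation and hence genuinely preserves the invariant $\mathrm{tw}\le k$ in the required direction, at the cost of doubling bag sizes on uncontraction, which is then repaired by the Bodlaender--Kloks refinement. When no large matching exists, he does not fall back on arbitrary low-degree elimination but on a carefully defined class of ``improved-graph simplicial'' vertices, whose removal he proves is safe. The Bodlaender--Kloks refinement step and the overall geometric recursion in your plan are the right ingredients, but without a reduction step that provably preserves $\mathrm{tw}\le k$, the argument collapses at the first recursive call.
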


\subsection{Results and Proofs}

In this section we prove our technical results about the list edge chromatic number and list total chromatic number of graphs with bounded treewidth and large maximum degree, and then give linear-time algorithms to solve \textsc{List Edge Chromatic Number} and \textsc{List Total Chromatic Number} on graphs of bounded treewidth.  

Our algorithms rely on the following theorem, which is a special case of both the List (Edge) Colouring Conjecture and the Total Colouring Conjecture.

\begin{thm}
Let $G$ be a graph with treewidth at most $k$ and $\Delta(G) \geq (k+2)2^{k+2}$.  Then $\ch'(G) = \Delta(G)$ and $\ch_T(G) = \Delta(G) + 1$.
\label{list-bound}
\end{thm}

This result can be derived from earlier work by Borodin, Kostochka and Woodall \cite[Theorem 7]{borodin97} (with a quadratic rather than exponential dependence on $k$).  For the sake of completeness, however, we will give a direct, self-contained proof of this result.  Moreover, since the dissemination of an initial version of this paper, the techniques introduced in our proof have been refined by Bruhn, Lang and Stein \cite{bruhn16} to demonstrate that only a linear dependence on $k$ is in fact required.

We prove the result by means of two lemmas, concerning the list edge chromatic number and list total chromatic number respectively; Theorem \ref{list-bound} follows immediately from these auxiliary results.  We start by considering the list edge chromatic number.

\begin{lma}
Let $G$ be a graph with treewidth at most $k$.  Then $\ch'(G) \leq \max \{\Delta(G), (k+2)2^{k+2}\}$.
\label{edge-lemma}
\end{lma}
\begin{proof}
We proceed by contradiction.  Suppose that the result does not hold, and let $G$ be a counterexample with as few edges as possible, so there exists a set $\mathcal{L} = (L_e)_{e \in E(G)}$ of colour-lists, all of length $\Delta_0 = \max \{\Delta(G),(k+2)2^{k+2}\}$, such that there is no proper list edge-colouring of $(G,\mathcal{L})$.  We may assume without loss of generality that $G$ contains no isolated vertices and so, by edge-minimality of $G$, we must have $\ch'(G') \leq \max \{\Delta(G'),(k+2)2^{k+2}\} \leq \Delta_0$ for any proper subgraph $G'$ of $G$.

We may assume that every edge $e \in E(G)$ is incident with at least $\Delta_0$ others: if $e$ is incident with fewer than $\Delta_0$ other edges then we can extend any proper list edge-colouring of $(G-e,\mathcal{L}\setminus \{L_e\})$ to a proper list edge-colouring of $(G,\mathcal{L})$.  We will show that, under this assumption, there must exist a nonempty set of vertices $U$ such that any proper list edge-colouring of $(G \setminus U, (L_e)_{e \in E(G \setminus U)})$ can be extended to a proper list edge-colouring of $(G,\mathcal{L})$, contradicting the choice of $G$ as an edge-minimal counterexample (as the fact there are no isolated vertices in $G$ means $e(G \setminus U) < e(G)$).

Let us define $L \subseteq V(G)$ to be the set of vertices of degree at least $\Delta_0 /2$ (the set of vertices of ``large'' degree), and note that every edge is incident with at least one vertex from $L$ (as otherwise it can be incident with only $\Delta_0 - 1$ other edges).  Thus $S=V(G) \setminus L$ (the set of vertices of ``small'' degree) is an independent set.  Fix a width $k$ tree decomposition $(T, \mathcal{D})$ of $G$, and choose $v \in L$ such that $h(t_v) = \max_{x \in L} h(t_x)$.  We then set $T'$ to be the subtree of $T$ rooted at $t_v$, that is the subgraph of $T$ induced by nodes $u$ such that the path from $u$ to the root contains $t_v$.

Set $X \subseteq V(G)$ to be $\bigcup_{t \in T'} \mathcal{D}(t)$, and $X' = X \setminus \mathcal{D}(t_v)$.  We then make the following observations (where, for any $U \subseteq V(G)$, $\Gamma(U)$ denotes the set of neighbours of vertices in $U$).
\begin{enumerate}
\item $L \cap X \subseteq \mathcal{D}(t_v)$: if any vertex $z \in L$ appears in a bag indexed by $T'$ but does not appear in $\mathcal{D}(t_v)$, we must have $h(t_z) > h(t_v)$, contradicting the choice of $v$.  This implies immediately that $X' \subseteq S$.
\item $\Gamma(X') \subseteq \mathcal{D}(t_v)$: no vertex from $X'$ can appear in a bag of the decomposition not indexed by $T'$, so clearly we have $\Gamma(X') \subseteq X$; but also, as $X' \subset S$, we have $\Gamma(X') \subseteq L$ and so we see $\Gamma(X') \subseteq L \cap X \subseteq \mathcal{D}(t_v)$.
\item $|X'| \geq \Delta_0/2 - k$: since $v$ appears only in bags indexed by $T'$, we have $\Gamma(v) \subseteq X \setminus \{v\}$, implying $|X| - 1 \geq d(v) \geq \Delta_0/2$, and we know $|X| - |X'| = |\mathcal{D}(t_v)| \leq k+1$, so we have $|X'| \geq |X| - k - 1 \geq \Delta_0/2 - k$.
\end{enumerate}
As the neighbourhood of any vertex $x \in X'$ is contained in $\mathcal{D}(t_v)$, there are at most $2^{k+1}$ possibilities for the neighbourhood of such a vertex.  Therefore there must exist some subset $U \subseteq X$ such that every vertex in $U$ has the same neighbourhood, and 
$$|U| \geq \frac{|X'|}{2^{k+1}} \geq \frac{\Delta_0/2 - k}{2^{k+1}} \geq \frac{(k+2)2^{k+1}-k}{2^{k+1}} \geq k+1 \geq |\Gamma(U)|.$$   

Now let $\phi$ be a proper edge-colouring of $(G \setminus U, (L_e)_{e \in E(G \setminus U)})$.  If we can extend $\phi$ to a proper edge-colouring of $G$ in which every edge incident with $U$ also receives a colour from its list, then we have a proper list edge-colouring of $(G,\mathcal{L})$, giving the required contradiction.  

Set $W = \Gamma(U)$, say $W = \{w_1,\ldots,w_r\}$ (where $r \leq |U|$), and let $H$ be the complete bipartite subgraph of $G$ induced by $U \cup W$.  Suppose, for $1 \leq i \leq r$, that $F_i$ is the set of colours already used by $\phi$ on edges incident with $w_i$, and for each $u \in U$ define the list $L_{uw_i}'$ to be $L_{uw_i}\setminus F_i$.  If we can properly colour the edges of $H$ in such a way that each edge $e \in E(H)$ is given a colour from $L_e'$, then we can extend $\phi$ as required.

Observe that, for each $i$, $|F_i| \leq \Delta(G) - |U|$, and so we have $|L_{uw_i}'| \geq \Delta_0 - \Delta(G) + |U| \geq |U|$.  But as $H$ is bipartite, with maximum degree $|U|$, we have (by Theorem \ref{bipartite})
$$\ch'(H) = \Delta(H) = |U| \enspace .$$
Therefore, as each list $L_e'$ contains at least $|U|$ colours, there exists a proper edge colouring of $H$ in which every edge $e$ receives a colour from its list $L_e'$, completing the proof.
\end{proof}

We use a very similar argument to prove an analogous result for the list total chromatic number.

\begin{lma}
Let $G$ be a graph with treewidth at most $k$.  Then $\ch_T(G) \leq \max\{\Delta(G), (k+2)2^{k+2}\} + 1$.
\label{total-lemma}
\end{lma}
\begin{proof}
Again, we proceed by contradiction.  Let $\Delta_0 = \max\{\Delta(G), (k+2)2^{k+2}\}$.  Suppose that the result does not hold, and let $G$ be a counterexample with as few edges as possible, so there exists a set $\mathcal{L} = (L_x)_{x \in V(G) \cup E(G)}$ of colour-lists, all of length $\Delta_0 + 1$, such that there is no proper list total colouring of $(G,\mathcal{L})$.  We may assume without loss of generality that $G$ contains no isolated vertices and so, by edge-minimality of $G$, we must have $\ch_T(G') \leq \max\{\Delta(G'),(k+2)2^{k+2}\} + 1 \leq \Delta_0 + 1$ for any proper subgraph $G'$ of $G$.

We may assume that every edge $e \in E(G)$ is incident with at least $\Delta_0-1$ others: if $e$ is incident with fewer than $\Delta_0 -1$ other edges then at most $\Delta_0$ colours can appear on vertices or edges incident with $e$, and so any proper list total colouring of $(G-e,\mathcal{L} \setminus \{L_e\})$ can be extended to a proper list total colouring of $(G,\mathcal{L})$.  Thus, if we define $L' \subseteq V(G)$ to be the set of vertices of degree at least $(\Delta_0 - 1)/2$, every edge is incident with at least one vertex from $L'$, and $S' = V(G) \setminus L'$ is an independent set.

Exactly as in the proof of Lemma \ref{edge-lemma}, we can find a subset $U \subseteq S'$ such that all vertices of $U$ have the same neighbourhood of size at most $k+1$, and
\begin{align*}
|U| & \geq \frac{(\Delta_0-1)/2-k}{2^{k+1}} \\
    & \geq \frac{((k+2)2^{k+2}-1)/2-k}{2^{k+1}} \\
    & > k+1 \\
    & \geq |\Gamma(U)|.
\end{align*}

As $G \setminus U$ is a proper subgraph of $G$, there exists a proper list total colouring $\phi$ of $(G \setminus U, \mathcal{L}')$ where $\mathcal{L}' = (L_x)_{x \in E(G\setminus U) \cup V(G \setminus U)}$.  If we can extend $\phi$ to a proper total colouring of $G$ in which every vertex from $U$ and every edge incident with $U$ also receives a colour from its list, then we have a proper list total colouring of $(G,\mathcal{L})$, giving the desired contradiction.

As in the proof of Lemma \ref{edge-lemma}, set $W = \Gamma(U)$, say $W = \{w_1,\ldots,w_r\}$ (where $r \leq |U|$), and let $H$ be the complete bipartite subgraph of $G$ induced by $U \cup W$.  Suppose, for $1 \leq i \leq r$, that $F_i$ is the set of colours already used by $\phi$ on $w_i$ and the edges incident with $w_i$, and for each $u \in U$ define $L_{uw_i}'$ to be $L_{uw_i} \setminus F_i$.

Observe that, for each $i$, $|F_i| \leq 1 + \Delta(G) - |U|$, and so we have 
$$|L_{uw_i}| \geq \Delta_0 + 1 - 1 - \Delta(G) + |U| \geq |U|.$$
As $H$ is bipartite, with maximum degree $|U|$, we have (by Theorem \ref{bipartite}) $\ch'(H) = \Delta(H) = |U|$.  Therefore, as each list $L_e'$ contains at least $|U|$ colours that are not already used by $\phi$ on edges or vertices incident with $e$, we can extend $\phi$ to a proper list colouring $\phi'$ including the edges incident with $U$.  If we can then colour the vertices of $U$ in such a way that no $u \in U$ receives a colour used by $\phi'$ on an adjacent vertex or incident edge, we can indeed extend $\phi$ to a proper list total colouring of $G$.

Note that every $u \in U$ has degree at most $k+1$, and so is adjacent to or incident with at most $2(k+1)$ vertices and edges of $G$.  Thus, as $|L_u| \geq \Delta_0 +1 > 2(k+1)$, there is at least one colour in $L_u$ that is not used by $\phi'$ on a vertex adjacent to $u$ or on an edge incident with $u$, and we can extend $\phi'$ to the vertices of $U$ greedily.  This gives a proper list total colouring of $(G,\mathcal{L})$, completing the proof.
\end{proof}

Together, Lemmas \ref{edge-lemma} and \ref{total-lemma} prove Theorem \ref{list-bound}.  We can now prove our main complexity result, based on Theorem \ref{list-bound}.

\begin{proof}[Proof of Theorem \ref{fpt}]
Let $G$ be a graph on $n$ vertices, with treewidth at most $k$, and set $f(k) = (k+2)2^{k+2}$.  We can check in time $O(f(k)n)$ whether $\Delta(G) \geq f(k)$, and if this is the case then, by Theorem \ref{list-bound}, we know the exact value of $\ch'(G)$ and $\ch_T(G)$.  Thus it suffices to solve both problems in the case that $\Delta(G) < f(k)$.

This is exactly the same as solving \textsc{List Chromatic Number} on $L(G)$ or $T(G)$, when $\Delta(G) < f(k)$.  But in this case, by Lemmas \ref{line-treewidth} and \ref{total-treewidth}, $L(G)$ and $T(G)$ have bounded treewidth.  Note that both graphs can be computed from $G$ in time 
\begin{align*}
O(e(L(G)) + e(T(G))) & = O((k+1)(f(k)+1)(|T(G)|+|L(G)|) \\
                     & = O(k^2f(k)|G|),
\end{align*}
and so we can then use Bodlaender's algorithm (Theorem \ref{bod-linear-decomp}) to find a tree decomposition of $L(G)$ or $T(G)$, of width at most $(k+1)(f(k)+1)$, in time $O(|G|)$ for any fixed $k$.  Given this decomposition we can, by Theorem \ref{chi,chi_l}, compute the list chromatic number of $L(G)$ or $T(G)$, and hence the list edge chromatic number or list total chromatic number of $G$, in linear time.
\end{proof}

Our proof also implies a polynomial-time algorithm to compute a proper list edge-colouring of any graph $G$ of fixed treewidth and large maximum degree, provided every $L \in \mathcal{L}$ has length at least $\Delta(G)$.  The same method can also be used to compute a list total colouring of such a graph, provided every list has length at least $\Delta(G) + 1$.  Here we describe an $O(n^2)$ algorithm to achieve this; with the use of suitable data structures this can probably be improved to a linear-time algorithm, but for simplicity of presentation we do not seek to optimise the running time here.

\begin{thm}
Let $G$ be a graph of order $n$ and treewidth $k$, with maximum degree at least $(k+2)2^{k+3}$, and let $\mathcal{L} = (L_e)_{e \in E(G)}$ be a set of colour-lists such that $|L_e| \geq \Delta(G)$ for all $e \in E(G)$.  Then, for fixed $k$, we can compute a proper list edge-colouring of $(G,\mathcal{L})$ in time $O(n^2)$.
\end{thm}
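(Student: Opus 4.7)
The plan is to turn the existence proof of Theorem \ref{chi'=ch'=delta} into a recursive colouring algorithm on subgraphs $G' \subseteq G$, carrying along the restriction of $\mathcal{L}$. Throughout the recursion each list retains its original length $\geq \Delta := \Delta(G) \geq (k+2)2^{k+3}$. At the outset we compute a width-$k$ tree decomposition of $G$ using Theorem \ref{bod-linear-decomp}; after discarding deleted vertices from each bag it remains a valid decomposition for every subgraph we encounter.

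At a recursive call on $G'$ we distinguish two cases, mirroring the proof of Theorem \ref{chi'=ch'=delta}. If some edge $e \in E(G')$ is incident with fewer than $\Delta$ other edges, recurse on $G' \setminus e$ and then colour $e$ greedily from $L_e$; a valid colour exists because $|L_e| \geq \Delta$ exceeds the number of colours blocked on $e$'s incident edges. Otherwise every edge of $G'$ meets $L^* = \{v : d_{G'}(v) \geq \Delta/2\}$, so $S^* = V(G') \setminus L^*$ is independent. Choose $v \in L^*$ maximising $h(t_v)$, let $T'$ be the subtree rooted at $t_v$, form $X$ and $X' = X \setminus \mathcal{D}(t_v) \subseteq S^*$ as in the proof of Theorem \ref{chi'=ch'=delta}, and bucket the $\geq \Delta/2 - k$ vertices of $X'$ by their neighbourhood in $\mathcal{D}(t_v)$ (at most $2^{k+1}$ buckets). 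The doubled hypothesis $\Delta \geq (k+2)2^{k+3}$ makes the largest bucket $U$ satisfy $|U| > k+1 \geq |\Gamma(U)|$. Recurse on $G' \setminus U$; then for each $w_i \in \Gamma(U)$ compute the colour set $F_i$ already used on $w_i$'s edges and the restricted lists $L'_{uw_i} = L_{uw_i} \setminus F_i$, which have length $\geq \Delta - (\Delta - |U|) = |U| = \Delta(H)$. Finally, invoke an algorithmic version of Galvin's Theorem \ref{bipartite} on the complete bipartite graph $H$ on $U \cup \Gamma(U)$ to colour its edges with colours from $\{L'_e\}$.

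For the running time, each call deletes at least one edge, giving $O(|E(G)|) = O(n)$ calls for fixed $k$. The non-Galvin work per call (computing degrees, detecting the low-incidence edge, locating $v$, constructing $X'$, bucketing) is $O(n)$, contributing $O(n^2)$ overall. Each algorithmic Galvin call on $K_{|U|, |\Gamma(U)|}$ can be implemented in time polynomial in $|U|$ for fixed $k$ (standardly, $O(|U|^2)$ suffices); since $\sum_{\text{calls}} |U| \leq n$ we have $\sum_{\text{calls}} |U|^2 \leq n^2$, so these contributions also total $O(n^2)$.

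The main obstacle is providing an efficient algorithmic version of Galvin's Theorem. This is by now standard: for bipartite $H$ one first constructs a proper edge-colouring (trivial when $H$ is complete bipartite), uses it to build a kernel-perfect orientation of the line graph of $H$, and then produces the list-colouring by iteratively extracting kernels, each step reducing to a bipartite matching computation. A minor secondary verification is that every inequality in the proof of Theorem \ref{chi'=ch'=delta} continues to hold in $G'$: using the constant threshold $\Delta$ (rather than $\Delta(G')$) in the definitions of $L^*$ and the low-incidence case preserves the arithmetic, since every list still has length $\geq \Delta$ no matter how much $\Delta(G')$ has dropped.
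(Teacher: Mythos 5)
Your proposal follows the same overall plan as the paper: run the existence argument of Theorem~\ref{chi'=ch'=delta} as an iterative/recursive deletion--reinsertion algorithm, peeling off either a low-incidence edge or a set $U$ of vertices with a common small neighbourhood $\Gamma(U)\subseteq\mathcal{D}(t_v)$, recursing, and then list-colouring the restored bipartite graph $H$. The substantive difference is in how $H$ is coloured on reinsertion. You retain the entire largest bucket $U$ (which can have size $\Theta(n)$, so $H\cong K_{|U|,\le k+1}$ is not of bounded size) and appeal to an \emph{algorithmic} version of Galvin's theorem, asserting an $O(|U|^2)$ bound with only a sketch of the kernel/stable-matching machinery. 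The paper instead deletes \emph{exactly} $k+1$ vertices at each step, so $H$ is a $K_{k+1,\le k+1}$ of size depending only on $k$; it then computes, for each of the $\le(k+1)^2$ edges, a sublist of exactly $k+1$ colours still available (time $O(kn)$ per edge), and brute-forces over all $(k+1)^{(k+1)^2}$ assignments in constant time, using Galvin only as an \emph{existence} guarantee that one of them works. This is cleaner: it avoids having to both cite and cost an algorithmic Galvin, and it avoids the awkwardness of amortising $\sum|U|^2$ across calls. Your approach is not wrong, but the $O(|U|^2)$ claim for constructive Galvin needs a reference or proof, and the list-extraction cost per call also needs a little care when $|U|$ is unbounded (though, as you note, $\sum|U|\le n$ rescues the total). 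Two further small points you gloss over that the paper handles: (i) to locate the common-neighbourhood set in $O(n)$ per step, the paper maintains the low-degree vertices \emph{sorted by neighbourhood} and updates this ordering incrementally as edges are deleted, rather than re-bucketing from scratch; and (ii) one should discard isolated vertices (initially and after deletions), both so that the surjection $t\mapsto t_v$ argument behind $|T|\le n$ applies and so that deleting $U$ is guaranteed to strictly decrease the number of edges.
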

\begin{proof}
The idea of the algorithm is to delete edges or sets of vertices repeatedly, as in the proof of Lemma \ref{edge-lemma}, until the degree of the remaining graph is less than $(k+2)2^{k+2} < \Delta(G)/2$, so the edges of this graph can be list-coloured greedily.  Edges and vertices are then reinserted and coloured to extend this colouring to the edges of $G$.  Unlike in the proof of Lemma \ref{edge-lemma}, where we identified a set of \emph{at least} $k+1$ vertices to delete, in this algorithm we always delete a set of \emph{exactly} $k+1$ vertices.  Without loss of generality, we may also assume that every list $L_e$ has length \emph{exactly} $\Delta(G) < n$, discarding additional colours if necessary.

We begin with some preprocessing. For each vertex, we construct a list of its neighbours, and we also construct an $n$-element array storing the degree of each vertex.  For each vertex of degree at most $k+1$, its list of neighbours is sorted into order; the list of vertices of degree at most $k+1$ is then also sorted, in order of neighbourhoods (so that vertices with the same neighbourhood occur consecutively).  All this can be done in time $O(n^2)$.

Note that each time we delete an edge, we can update this information in time $O(n)$: we update the neighbour lists of the edge's endpoints and decrement their degrees, and if one or both of the endpoints now has degree at most $k+1$ its neighbours are sorted and it is inserted into the correct position in the list of small-degree vertices.  From the point of view of updating information, deleting a set of $k+1$ vertices can be regarded as a series of edge-deletions, each performed in time $O(n)$.  As there are $O(n)$ edges in total, this means we can perform all updates after deletions in time $O(n^2)$.

Given the array of degrees of all vertices, it is straightforward to identify in time $O(n)$ an edge incident with fewer than $\Delta(G)$ others, if such an edge exists.  If there is no such edge, we know from the proof of Lemma \ref{edge-lemma} that (provided the maximum degree of the graph is still at least $(k+2)2^{k+2}$) there exists a set of $k+1$ vertices with a common neighbourhood of size at most $k+1$.  As the vertices of degree at most $k+1$ are sorted by their neighbourhoods, it is also possible to identify such a set of vertices in linear time.  Thus at each step we are able to identify the edge or set of vertices to delete in time $O(n)$, and so all deletions (and subsequent updating) can be performed in time $O(n^2)$.

It therefore remains to show that we can also perform the reinsertions in time $O(n^2)$.  When we reinsert an edge, we can simply colour it with the first available colour from its colour-list, taking time $O(n)$.  When reinserting a set of $k+1$ vertices, we need to colour up to $(k+1)^2$ edges which form a complete bipartite subgraph.  Recall from the proof of Lemma \ref{edge-lemma} that every such edge still has at least $k+1$ available colours from its list (i.e.~colours that have not already been used on incident edges), and that there exists a proper list edge-colouring of the bipartite graph if every edge has a list of exactly $k+1$ permitted colours.  For each of the edges, we can compute in time $O(kn)$ a list of $k+1$ colours from its colour-list which have not yet been used on incident edges.  We can then check in constant time all possible colourings of the $(k+1)^2$ edges in which each receives one of the $k+1$ colours from its list, to find a proper colouring of this bipartite graph, which is guaranteed to extend the list edge-colouring as required.

Thus we can perform all reinsertions in time $O(n^2)$, completing the proof of the theorem.
\end{proof}

\section{List Hamilton Path}
\label{ListHamPath}

Recall from Section \ref{probs} the List Hamilton Path (or Itinerant Lecturer) problem:
\\

\hangindent=\parindent
\textsc{List Hamilton Path} \\
\textit{Input:} A graph $G = (V,E)$, and a set of lists $\mathcal{L} = \{L_v \subseteq \{1,\ldots,|V|\}: v \in V\}$ of permitted positions. \\
\textit{Question:} Does there exist a path $P = P[1]\ldots P[|G|]$ in $G$ such that, for $1 \leq i \leq |G|$, we have $i \in L_{P[i]}$?
\\

\noindent Given a graph $G$ and a set of lists $\mathcal{L} = \{L_v \subseteq \{1, \ldots, |G|\}: v \in V(G)\}$ of permitted positions, we say a path $P = P[1]\ldots P[|G|]$ in $G$ is a \emph{valid} Hamilton path if $i \in L_{P[i]}$ for every $i$.  In this section, we prove the following result.

\begin{thm}
\textsc{List Hamilton Path}, parameterised by pathwidth, is W[1]-hard.  Moreover, under the exponential time hypothesis, there is no algorithm to solve \textsc{List Hamilton Path} in time $f(k)n^{o(k)}$ on graphs of pathwidth $k$.
\label{LHP-hard}
\end{thm}

We prove our theorem by means of a reduction from \textsc{Multicolour Clique}.  The reduction involves an edge representation strategy: we select an edge by visiting the vertices of the associated gadget in a specific order, and the permitted positions for other vertices in the graph ensure that the edges selected in this way must in fact be the edges of a multicolour clique.

Suppose $G$ is the $k$-coloured, $n$-vertex graph in an instance of \textsc{Multicolour Clique}: we may assume without loss of generality that all $k$ vertex classes have the same size, and also that the number of edges between each pair of vertex classes is the same (as adding isolated edges and vertices does not change the existence or otherwise of a multicolour clique).  Let the vertex classes be $V_1, \ldots, V_k$, where $V_i$ contains vertices $V_i[1], \ldots, V_i[p]$, and assume that there are $q$ edges between each pair of vertex classes.  For technical reasons, it will be useful to assume (without loss of generality) that $k, n \geq 4$.

We now describe the construction of a graph $H$ of pathwidth at most $5k$, and a set of lists $\mathcal{L} = (L_v)_{v \in V(H)}$ such that there is a valid Hamilton path in $(H, \mathcal{L})$ if and only if $G$ contains a multicolour clique.  %Lemmas \ref{MC=>LHP}, \ref{LHP=>MC} and \ref{bdd-pathwidth} will demonstrate that $(H, \mathcal{L})$ does indeed have the required properties.

Our construction consists of $k+1$ paths, with some additional edges linking them: paths $P_1, \ldots, P_k$ correspond to the vertex classes $V_1, \ldots, V_k$, and an additional path $Q$ is used to connect $P_1, \ldots, P_k$.  Each path $P_i$ has $2n^2$ vertices (so $|P_i| \geq 4q \binom{k}{2} = 4e(G)$), and we denote the $j^{th}$ vertex of $P_i$ by $P_i[j-1]$.  The path $Q = Q_1 \ldots Q_k$ is the concatenation of $k$ subpaths $Q_1, \ldots, Q_k$, each containing $n^2(n-2)$ vertices and, for $1 \leq i \leq k$, every vertex of $Q_i$, except for the first vertex of $Q_1$ and the last vertex of $Q_k$, is adjacent to both $P_i[0]$ and $P_i[2n^2-1]$.

In addition, we have a number of edge-gadgets, consisting of edges between pairs of the paths $P_1, \ldots, P_k$.  Suppose $E(G) = \{e_0, \ldots, e_{m-1}\}$.  Then, for each edge $e_r$ between $V_i$ and $V_j$ with $i < j$, we have an edge-gadget $G(e_r)$, involving the $r^{th}$ group of four vertices in $P_i$ and the corresponding group of vertices from $P_j$.  $G(e_r)$ has edges $P_i[4r]P_j[4r+1]$, $P_j[4r]P_i[4r+1]$, $P_i[4r+1]P_j[4r+3]$ and $P_i[4r+2]P_j[4r+2]$, as illustrated in Figure \ref{G(e_r)}.

\begin{figure} [h]
\centering
\includegraphics[width=0.7\linewidth]{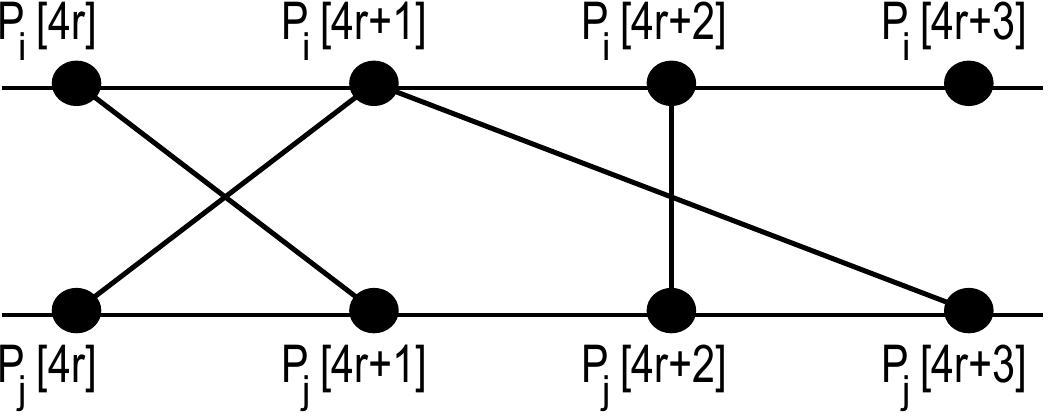}
\caption{The edge-gadget $G(e_r)$, where $e_r$ has endpoints in $V_i$ and $V_j$ with $i < j$}
\label{G(e_r)}
\end{figure}

This completes the construction of the graph $H$.  We now define the list of permitted positions for each vertex.  For $1 \leq i \leq k$, we set 
$$L_{P_i[0]} = \{(i-1)n^3 + 3 \alpha n^2: 1 \leq \alpha \leq p\},$$ 
and 
$$L_{P_i[2n^2-1]} = \{(i-1)n^3 + 3 \alpha n^2 + (2n^2-1) + k + 1 - 2i: 1 \leq \alpha \leq p\}.$$
We further define the list 
$$L(i,t) = \{(i-1)n^3 + 3 \alpha n^2 + t + \beta: 1 \leq \alpha \leq p, -(k-1) \leq \beta \leq k-1\}.$$
For every internal vertex $P_i[t]$, the list $L_{P_i[t]}$ will contain $L(i,t)$.  In most cases we in fact set $L_{P_i[t]} = L(i,t)$, the only exceptions being three vertices in each edge-gadget $G(e)$:  if $e = V_i[r]V_j[s]$ and the vertices in $G(e)$ are $P_i[\ell], \ldots, P_i[\ell+3], P_j[\ell],\ldots,P_j[\ell+3]$, then the list for $P_i[\ell+1]$ will additionally contain positions 
$$\{(j-1)n^3 + 3sn^2 + (\ell+1) + \beta : -(k-1) \leq \beta \leq k-1\},$$ 
while the lists for $P_j[\ell+1]$ and $P_j[\ell+2]$ also contain 
$$\{(i-1)n^3 + 3rn^2 + (\ell + 1) + \beta: -(k-1) \leq \beta \leq k-1\}$$
and 
$$\{(i-1)n^3 + 3rn^2 + (\ell + 2) + \beta: -(k-1) \leq \beta \leq k-1\}$$
respectively.  The first vertex on $Q$ has singleton list $\{1\}$ and the last vertex on $Q$ has singleton list $\{kn^3\}$; we place no restriction on the positions that the remaining vertices from $Q$ can take in a valid Hamilton path.

Intuitively, the idea is that any valid Hamilton path must (with the exception of a few vertices belonging to the edge-gadgets) traverse $P_1, \ldots, P_k$ in that order, using sections of $Q$ before and after each $P_i$ to connect the paths.  In this construction, our choice of position for $P_i[0]$ corresponds to a choice of vertex from $V_i$: if $P_i[0]$ takes position $(i-1)n^3 + 3rn^2$, we say that the Hamilton path \emph{selects} $V_i[r]$.  Notice that, for any edge-gadget $G_e$ (where $e = V_i[r]V_j[s]$), there are precisely two possibilities for which edges within $G_e$ are used by a Hamilton Path, as illustrated in Figure \ref{swapping}. If we use the edge $P_i[4r]P_j[4r+1]$ (and so also use the edges $P_j[4r+2]P_i[4r+2]$,  $P_j[4r]P_i[4r+1]$ and $P_i[4r+1]P_j[4r+3]$) we say that the Hamilton path \emph{selects} the edge $e$.

\begin{figure} [h]
\centering
\includegraphics[width=0.6\linewidth]{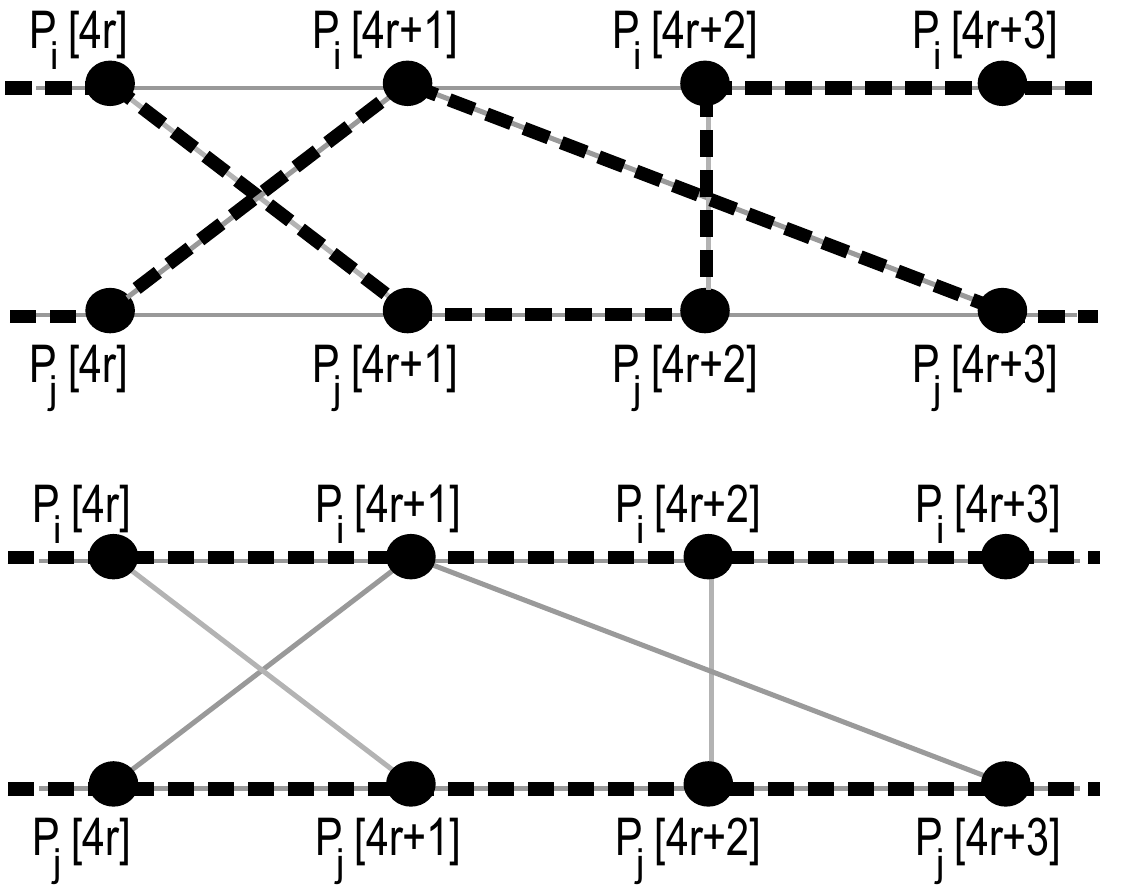}
\caption{Two possibilities for the edges used by a Hamilton path in the gadget $G_e$: in the first case, we say that the path selects $e$.}
\label{swapping}
\end{figure}

We begin by explaining how, given a multicolour clique in $G$, we can construct a valid Hamilton path in $H$.

\begin{lma}
If $G$ is a yes-instance for \textsc{Multicolour Clique}, then $(H,\mathcal{L})$ is a yes-instance for \textsc{List Hamilton Path}.
\label{MC=>LHP}
\end{lma}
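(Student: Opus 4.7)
The plan is to build an explicit valid Hamilton path in $H$ from a multicolour clique $V_1[r_1],\ldots,V_k[r_k]$ in $G$. The path will have $P_i[0]$ at position $(i-1)n^3 + 2r_in^2$ (so that $P_i$ selects $V_i[r_i]$), traverse $P_1,\ldots,P_k$ in order, and swap in the edge-gadget $G(V_i[r_i]V_j[r_j])$ for every pair $i<j$; this last step is possible because $V_i[r_i]V_j[r_j]$ is an edge of $G$ and so the gadget $G(V_i[r_i]V_j[r_j])$ exists in $H$.

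Concretely, for suitable integers $0 \leq x_1,\ldots,x_k \leq n^2(n-2)$, the path reads: the first $x_1$ vertices of $Q_1$; then the $P_1$-segment from $P_1[0]$ to $P_1[2n^2-1]$ with the prescribed swaps; then the remaining $n^2(n-2)-x_1$ vertices of $Q_1$; then the first $x_2$ vertices of $Q_2$; then the $P_2$-segment; and so on, finishing with the remaining vertices of $Q_k$. All transitions are through edges of $H$: we use the adjacency of each $Q_i$-vertex to both endpoints of $P_i$ to jump into and out of the $P_i$-segment, together with the path-edges of $Q$ and the $P_i$, plus the four gadget-edges of each $G(V_i[r_i]V_j[r_j])$ while swapping. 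The $x_i$ are forced by the requirement that $P_i[0]$ hit its designated position, yielding the recursion $x_{i+1}-x_i = 2(r_{i+1}-r_i)n^2 - k + 2i - 1$ with base $x_1 = 2r_1n^2 - 1$; using $|P_i| = 2n^2 \geq 4e(G)$ one checks each $x_i \in [0, n^2(n-2)]$.

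The remaining task is to verify that every vertex of $H$ appears exactly once and in a position from its list. Summing the net effect of the $\binom{k}{2}$ swaps shows the $P_i$-segment has length exactly $2n^2 + k - 2i + 1$ (as observed just before Proposition~\ref{swap-allowed}), placing $P_i[2n^2-1]$ at $(i-1)n^3 + 2r_in^2 + 2n^2 - 1 + (k+1-2i) \in L_{P_i[2n^2-1]}$; combined with the $Q$-bookkeeping this exhausts all $kn^3$ positions exactly once. For an internal $P_i[l]$ not exchanged by a swap, the position is $(i-1)n^3 + 2r_in^2 + l + \beta$ where $\beta$ equals the number of earlier swaps with $P_j$, $j>i$, minus the number of earlier swaps with $P_j$, $j<i$; since $P_i$ is involved in only $k-1$ swaps, $|\beta| \leq k-1$, so the position lies in $L(i,l) \subseteq L_{P_i[l]}$. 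The three vertices per gadget with augmented lists are exactly those displaced into a neighbouring path's segment during a swap, and their positions fall in the augmented (rather than $L(i,l)$) portion of their lists. Vertices of $Q$ carry no list restriction.

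The main obstacle is the positional bookkeeping. One must verify that (i) each swap shifts every later same-path vertex by exactly one position in the correct direction, (ii) each displaced vertex lands with a position of the form $(i-1)n^3 + 2r_in^2 + (L+1) + \beta$ or $(i-1)n^3 + 2r_in^2 + (L+2) + \beta$ with $|\beta| \leq k-1$ required by its augmented list (where $L$ is the gadget's offset in $P_i$), and (iii) the number of $Q$-vertices between consecutive $P_i$-segments matches precisely the gap prescribed by the recursion for $x_i$. Once this accounting is done, the sequence above is confirmed to be a valid Hamilton path and the lemma follows.
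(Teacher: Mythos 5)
Your proposal is correct and follows essentially the same route as the paper: place $P_i[0]$ at position $(i-1)n^3 + 2r_in^2$, interleave the $P_i$-segments with stretches of $Q$, swap in every gadget $G(V_i[r_i]V_j[r_j])$ (which exists precisely because the clique supplies the edge), and then verify positions using the earlier observations about net gains of $k+1-2i$ and the $\pm(k-1)$ slack built into the lists; your $x_i$ is exactly the paper's $\alpha$ for $Q_i$, and your recursion matches its formula $\alpha = 2r_in^2 - 1 - \sum_{j<i}\beta_j$. The only slight imprecision is attributing the range check $x_i \in [0, n^2(n-2)]$ to the bound $2n^2 \geq 4e(G)$ — that bound ensures $P_i$ is long enough to host all edge-gadgets, while the range of $x_i$ follows instead from $r_i \leq p$ and $|Q_i| = n^2(n-2)$ — but this does not affect the correctness of the construction.
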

\begin{proof}
Suppose we have a multicolour clique that contains $V_i[r_i]$ for $1 \leq i \leq k$.  We claim we can then find a Hamilton path in $H$ in which every vertex has a position from its permitted list.  

In such a path, we give $P_i[0]$ position $(i-1)n^3 + 3r_in^2$.  Each path $P_i$ is traversed from end to end in order, except for the selection of precisely $\binom{k}{2}$ edges: for each $1 \leq i < j \leq k$ we will select the edge $V_i[r_i]V_j[r_j]$ (which is possible since $V_1[r_1],\ldots,V_k[r_k]$ induces a clique.  Note that the lists for each vertex in the gadget corresponding to the edge $V_i[r_i]V_j[r_j]$ will always allow us to select this edge: if the vertices in the gadget are $P_i[\ell], \ldots, P_i[\ell+3], P_j[\ell],\ldots,P_j[\ell+3]$, then $P_i[\ell+1]$ is allowed to take any position in $\{(j-1)n^3 + 3r_jn^2 + (\ell+1) + \beta : -(k-1) \leq \beta \leq k-1\}$, whereas $P_j[\ell + 1]$ can take any position in $\{(i-1)n^3 + 3r_in^2 + (\ell + 1) + \beta: -(k-1) \leq \beta \leq k-1\}$ and $P_j[\ell+2]$ any position in $\{(i-1)n^3 + 3r_in^2 + (\ell + 2) + \beta: -(k-1) \leq \beta \leq k-1\}$.

The number of vertices lying on the path between $P_i[0]$ and $P_i[2n^2-1]$ will then be equal to $2n^2 - 2 + (k-i) - (i-1) = 2n^2 - 1 + k - 2i$ (since, in comparison with simply traversing $P_i$, this segment of the path gains one vertex for each selected edge $V_i[r_i]V_j[r_j]$ where $j>i$, and loses one vertex for each selected edge $V_i[r_i]V_j[r_j]$ where $j<i$).  This means that $P_i[2n^2 - 1]$ will have position $(i-1)n^3 + 3 r_i n^2 + (2n^2-1) + k + 1 - 2i$, which belongs to its list, and it is straightforward to verify that every other vertex on each path $P_i$ will receive a position from its list.

Finally, we connect up these path segments using segments of $Q$, so that each vertex $P_i[0]$ has the desired position in the final path.  If $P_i[0]$ occurs on the path immediately after some vertex $Q_i[\alpha_i]$ then $P_i[2n^2-1]$ is followed by $Q_i[\alpha+1]$, so every vertex of $Q$ is included on the path; the first vertex of $Q$ is the first vertex on the path, and the last vertex of $Q$ the last, as required by the lists for these two vertices.

This path then includes every vertex in $H$, and gives each vertex a position from its permitted list, so $(H,\mathcal{L})$ is indeed a yes-instance for \textsc{List Hamilton Path}.
\end{proof}

We now proceed to demonstrate that the converse is also true: if $H$ contains a valid Hamilton Path then $G$ must contain a multicolour clique.  In order to demonstrate this fact in Lemma \ref{LHP=>MC}, we first prove a number of propositions about the structure of valid Hamilton paths in $H$.  Throughout, we shall assume that $P$ is a valid Hamilton path in $H$.

\begin{prop}
For any vertex $P_i[\ell]$ in $H$, there is a unique pair of values $(f_P(P_i[\ell]),g_P(P_i[\ell]))$ with $f_P(P_i[\ell]) \in \{1,\ldots,k\}$ and $g_P(P_i[\ell]) \in \{1,\ldots,p\}$ such that $P_i[\ell]$ takes position $(f_P(P_i[\ell])-1) \cdot n^3 + 3 \cdot g_P(P_i[\ell]) \cdot n^2 + t + \beta$ for some $t \in \{0,\ldots,2n^2-1\}$ and $\beta \in \{-(k-1),\ldots,(k-1)\}$.  Moreover, if vertices $P_i[\ell]$ and $P_j[\ell']$ occur consecutively on $P$, then $f_P(P_i[\ell]) = f_P(P_j[\ell'])$ and $g_P(P_i[\ell]) = g_P(P_j[\ell'])$.
\label{consecutive}
\end{prop}
\begin{proof}
It is straightforward to show that, if $a_1 \neq a_2$ or $b_1 \neq b_2$, then, for any choice of $t_1,t_2 \in \{0,\ldots,2n^2-1\}$ and $\beta_1,\beta_2 \in \{-(k-1),\ldots,(k-1)\}$, we have
$$\left|(a_1 \cdot n^3 + 3 \cdot b_1 \cdot n^2 + t_1 + \beta_1) - (a_2 \cdot n^3 + 3 \cdot b_2 \cdot n^2 + t_2 + \beta_2)\right| > 1,$$
which immediately implies the result.
\end{proof}

\begin{prop}
For each $i \in \{1,\ldots,k\}$, the set of vertices $f_P^{-1}(i)$ occurs consecutively on $P$, and there is a unique $\phi(i) \in \{1,\ldots,p\}$ such that $g_P(P_i[\ell]) = \phi(i)$ for every $P_i[\ell] \in f_P^{-1}(i)$.
\label{k-segments}
\end{prop}
\begin{proof}
Let us call a vertex belonging to the path $P_i$ for some $1 \leq i \leq k$ a $p$-vertex, and a vertex belonging to the path $Q$ in $H$ a $q$-vertex.  The Hamilton path $P$ can then be decomposed into segments, where each segment is a maximal set of consecutive vertices of the same kind (either $p$-vertices or $q$-vertices).  Note that the first and last segments of $P$ must consist of $q$-vertices, since we know that the first vertex on $Q$ must be the first vertex of $P$, and the last vertex of $Q$ must be the last vertex of $P$.

Now consider an arbitrary segment of $p$-vertices, and note that each of its endpoints must be adjacent in $P$ to a $q$-vertex.  Notice that the set of $p$-vertices that have at least one $q$-vertex as a neighbour is precisely $\{P_i[0],P_i[2n^2-1]: 1 \leq i \leq k\}$, so the endpoints of every segment of $p$-vertices must lie in this set.  Since two distinct segments of $p$-vertices must have disjoint pairs of endpoints in this set, and $|\{P_i[0],P_i[2n^2-1]: 1 \leq i \leq k\}| = 2k$, it follows that there can be at most $k$ distinct segments of $p$-vertices.  Moreover, by Proposition \ref{consecutive}, all $p$-vertices belonging to the same segment must map to the same value under both $f$ and $g$.

Notice that, by definition of the lists for $P_i[0]$ and $P_i[2n^2-1]$, we have $f(P_i[0]) = f(P_i[2n^2-1]) = i$.  Thus, for each $1 \leq i \leq k$, there must be precisely one segment, $S_i$, of $p$-vertices such that $f_P(v) = i$ for every vertex $v$ belonging to the segment.  It follows that the vertices of $f_P^{-1}(i)$ are exactly those of $S_i$, so all vertices of $f_P^{-1}(i)$ occur consecutively on $P$.  Moreover, it therefore follows from Proposition \ref{consecutive} that there exists some $\phi(i) \in \{1,\ldots,p\}$ such that $g_P(v) = \phi(i)$ for every vertex $v \in f_P^{-1}(i)$.
\end{proof}

\begin{prop}
For each $i \in \{1,\ldots,k\}$, we have $|f_P^{-1}(i)| = 2n^2 + k+1 - 2i$.
\label{length-segment}
\end{prop}
\begin{proof}
Recall that $f_P(P_i[0]) = f_P(P_i[2n^2-1]) = i$, so by Proposition \ref{k-segments} we have $g_P(P_i[0]) = g_P(P_i[2n^2-1]) = \phi(i)$.  It then follows from the lists of permitted positions for $P_i[0]$ and $P_i[2n^2-1]$ that their positions on the path differ by precisely $2n^2 + k - 2i$.  Thus we must have $|f_P^{-1}(i)| \geq 2n^2 + k - 2i + 1$. It then follows that 
$$\sum_{i=1}^k |f_P^{-1}(i)| \geq \sum_{i=1}^k2n^2 + k - 2i + 1 
						   = 2kn^2 + k^2 + k - 2 \cdot \frac{1}{2}k(k+1) 
						   = 2kn^2,
$$
with equality if and only if $|f_P^{-1}(i)| = 2n^2 + k - 2i + 1$ for each $i$.  Since the sets $\{f_P^{-1}(1),\ldots,f_P^{-1}(k)\}$ partition the $p$-vertices, of which there are in total $2kn^2$, it therefore follows that $\sum_{i=1}^k |f_P^{-1}(i)| = 2kn^2$ and hence that $|f_P^{-1}(i)| = 2n^2 + k - 2i + 1$ for each $i$, as required.
\end{proof}

\begin{prop}
If $P$ selects the edge $e$, we must have $e = V_i[\phi(i)]V_j[\phi(j)]$ for some $1 \leq i < j \leq k$.
\label{poss-swaps}
\end{prop}
\begin{proof}
Suppose that $P$ selects the edge $e$, and that the vertices in the gadget $G_e$ are $P_i[\ell],\ldots,P_i[\ell+3],P_j[\ell],\ldots,P_j[\ell+3]$.  Since $P_i[\ell]$ and $P_j[\ell+1]$ occur consecutively on $P$, it follows from Proposition \ref{consecutive} that $f_P(P_i[\ell]) = f_P(P_j[\ell+1])$ and $g_P(P_i[\ell]) = g_P(P_i[\ell+1])$.  The list for $P_i[\ell]$ is precisely $\{(i-1)n^3 + 3 \alpha n^2 + t + \beta: 1 \leq \alpha \leq p, -(k-1)\leq \beta \leq (k-1)\}$, so we can see immediately that $f_P(P_i[\ell])=i$ and hence (by Proposition \ref{k-segments}) $g_P(P_i[\ell])=\phi(i)$.  

The fact that $f_P(P_j[\ell+1])=i$ means that the only possible positions in the list for $P_j[\ell+1]$ are $\{(i-1)n^3 + 3rn^2 + (\ell+1) + \beta: -(k-1) \leq \beta \leq (k-1)\}$, where $e = V_i[r]V_j[s]$ for some $s \in \{1,\ldots,p\}$.  We know, however, that $g_P(P_j[\ell+1]) = \phi(i)$, so it follows that $V_i[\phi(i)]$ is one endpoint of $e$.

Applying symmetric reasoning to the $P_j[\ell]$ and $P_i[\ell + 1]$ (which also occur consecutively on $P$) tells us that $V_j[\phi(j)]$ is the other endpoint of $e$.  We therefore have $e = V_i[\phi(i)]V_j[\phi(j)]$, as required.
\end{proof}

\begin{prop}
The set of edges selected by $P$ is $\{V_i[\phi(i)V_j[\phi(j)]: 1 \leq i < j \leq k\}$.
\label{all-swaps}
\end{prop}
\begin{proof}
First observe that the only vertices which can possibly belong to $f_P^{-1}(i)$ are vertices which either belong to $P_i$, or else to an edge gadget $G_e$ such that $P$ selects $e$ and one endpoint of $e$ is in $V_i$.  Suppose that $P$ selects the gadget $G_e$ where $e = V_i[r]V_j[s]$.  If $j > i$, $f_P^{-1}(i)$ will contain precisely two vertices of $P_j$ in $G_e$, but one vertex of $P_i$ in $G_e$ will not belong to $f_P^{-1}(i)$.  If $j < i$, on the other hand, then $f_P^{-1}(i)$ will contain exactly one vertex of $P_j$ in $G_e$, but two vertices of $P_i$ in $G_e$ will not belong to $f_P^{-1}(i)$.

We now prove, by induction on $i$, that the set of edges $P$ selects with one endpoint in $V_1 \cup \cdots \cup V_i$ is precisely $\{V_{\ell}[\phi(\ell)]V_j[\phi(j)]: 1 \leq \ell \leq i\}$.  We know from Proposition \ref{length-segment} that $|f_P^{-1}(1)| = 2n^2 + k - 1$ so, as $|P_i| = 2n^2$, we must have additional vertices belonging to $f_P^{-1}(1)$ from selected edges with an endpoint in $V_1$.  For each such edge that $P$ selects, the number of vertices in $f_P^{-1}(i)$ will be increased by exactly one, so we must select exactly $k-1$ such edges.  By Proposition \ref{poss-swaps}, we know that the only edges we can select which have one endpoint in $V_1$ are $V_1[\phi(1)]V_2[\phi(2)],V_1[\phi(1)]V_3[\phi(3)],\ldots,V_1[\phi(1)]V_k[\phi(k)]$, so in fact all of these edges must be selected.  This completes the proof of the base case.

For the inductive step, we will assume that all edges $V_i[\phi(i)]V_j[\phi(j)]$ with $1 \leq i < k'$ and $i<j\leq k$ are selected.  We need to show that all edges $V_{k'}[\phi(k')]V_j[\phi(j)]$ with $k'<j\leq k$ are also selected.

We know from Proposition \ref{length-segment} that $|f_P^{-1}(k')|= 2n^2 + k + 1 - 2k'$.  Since $|P_{k'}| = 2n^2$, and we know from the inductive hypothesis that exactly $k'-1$ edges $V_i[\phi(i)]V_{k'}[\phi(k')]$ with $i<k'$ are selected (each of which reduces the number of vertices in $f_P^{-1}(k')$ by one), it follows that there must be exactly
$$2n^2 + k + 1 - 2k' - (2n^2 - (k' - 1)) = k - k'$$
edges selected with one endpoint in $V_{k'}$ and one endpoint in $V_j$ for some $j>k'$.  By Proposition \ref{poss-swaps}, the only such edges which could be selected are $V_{k'}[\phi(k')]V_{k'+1}[\phi(k'+1)],\ldots,V_{k'}[\phi(k')]V_k[\phi(k)]$; it follows that all of these edges must be selected by $P$.

This completes the proof by induction, and the result follows immediately.
\end{proof}

\begin{prop}
The set of vertices $\{V_i[\phi(i)]: 1 \leq i \leq k\}$ induces a multicolour clique in $G$.
\label{makes-clique}
\end{prop}
\begin{proof}
It is immediate that this set of vertices is multicoloured, since it contains exactly one vertex from each vertex class $V_1,\ldots,V_k$.  Moreover, we know from Proposition \ref{all-swaps} that $P$ selects $V_i[\phi(i)]V_j[\phi(j)]$ for each $1 \leq i < j \leq k$, so $V_i[\phi(i)]V_j[\phi(j)]$ must be an edge of $G$, for each $1 \leq i < j \leq k$.  Hence $\{V_i[\phi(i)]: 1 \leq i \leq k\}$ induces a clique.
\end{proof}

The fact that the existence of a valid Hamilton path in $H$ implies the existence of a multicolour clique in $G$ now follows immediately.

\begin{lma}
If $(H,\mathcal{L})$ is a yes-instance for \textsc{List Hamilton Path}, then $G$ is a yes-instance for \textsc{Multicolour Clique}.
\label{LHP=>MC}
\end{lma}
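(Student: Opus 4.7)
The plan is to read off a multicolour clique directly from the positions taken by the path-start vertices $P_1[0], \ldots, P_k[0]$ of the assumed valid Hamilton path. First I would observe that the list $L_{P_i[0]} = \{(i-1)n^3 + 2\alpha n^2 : 1 \leq \alpha \leq p\}$ forces $P_i[0]$ to occupy some position of the form $(i-1)n^3 + 2r_in^2$ with $r_i \in \{1,\ldots,p\}$; that is, $P_i$ selects $V_i[r_i]$. The candidate clique is then $\{V_1[r_1], \ldots, V_k[r_k]\}$, and the whole task reduces to showing that $V_i[r_i]V_j[r_j] \in E(G)$ for every pair $i<j$.

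To verify adjacency for a given pair $i<j$, I would apply Proposition \ref{count-swaps}, which guarantees that the valid Hamilton path contains exactly one swap between $P_i$ and $P_j$. Such a swap occurs in some edge-gadget $G(e_r)$ where $e_r = V_i[r]V_j[s]$ is an edge of $G$ between colour classes $V_i$ and $V_j$. Proposition \ref{swap-allowed} then tells me that swapping in $G(V_i[r]V_j[s])$ is only possible if $P_i$ selects $V_i[r]$ and $P_j$ selects $V_j[s]$; combining this with our identification of the selected vertices forces $r = r_i$ and $s = r_j$. Hence $V_i[r_i]V_j[r_j]$ is the edge $e_r$ of $G$, which establishes the required adjacency.

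Since this holds for every pair $i<j$, the vertices $V_1[r_1], \ldots, V_k[r_k]$ form a clique containing exactly one vertex of each colour, so $G$ is indeed a yes-instance of \textsc{Multicolour Clique}. The bulk of the technical work has already been absorbed into Propositions \ref{swap-allowed} and \ref{count-swaps}, so no obstacle of substance remains; the only point needing a moment's care is noting that every pair $(i,j)$ really does produce a swap in an edge-gadget that corresponds to some genuine edge of $G$ (rather than, say, to a non-gadget segment), but this is immediate from the construction, since all non-path edges of $H$ outside the paths $P_1,\ldots,P_k,Q$ lie inside some $G(e_r)$, and swaps as defined can only take place within such gadgets.
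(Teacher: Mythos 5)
Your proposal is correct and follows essentially the same route as the paper's own proof: extract the selected vertices from the positions of $P_1[0],\ldots,P_k[0]$, invoke Proposition~\ref{count-swaps} to get one swap per pair of paths, and invoke Proposition~\ref{swap-allowed} to conclude that each such swap certifies an edge of $G$ between the corresponding selected vertices.
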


It now remains only to bound the pathwidth of the graph $H$.

\begin{lma}
$H$ has pathwidth at most $5k$.
\label{bdd-pathwidth}
\end{lma}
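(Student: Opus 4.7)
The plan is to exhibit an explicit path decomposition whose bags have size at most $5k+1$. The key observation is that the longest-reaching edge of any gadget $G(e_r)$ is $P_i[4r+1]P_j[4r+3]$, which spans just two position-indices; every other edge of $H$ not incident with $Q$ is either a path edge of some $P_i$ or another gadget edge whose two endpoints have position-indices differing by at most one. This motivates sliding a window of width three across the positions in parallel across all paths $P_1,\ldots,P_k$, and handling $Q$ separately afterwards.

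In the first phase I would take, for each $l\in\{0,1,\ldots,2n^2-3\}$, the bag
$$B^P_l \;=\; \{P_i[l],P_i[l+1],P_i[l+2]:1\leq i\leq k\}\cup\{P_i[0]:1\leq i\leq k\},$$
of size at most $4k$, retaining each $P_i[0]$ throughout so that it can later appear together with the vertices of $Q_i$. The width-three window ensures that for every gadget $G(e_r)$ all four of its edges --- in particular the longer edge $P_i[4r+1]P_j[4r+3]$ --- have both endpoints in $B^P_{4r+1}$, while the ordinary path edges of each $P_i$ are covered trivially by any window bag containing two consecutive positions. Between $B^P_l$ and $B^P_{l+1}$ I would splice in $k$ pairs of intermediate bags that, for each $i$ in turn, first introduce $P_i[l+3]$ and then forget $P_i[l]$, so the instantaneous bag size peaks at $4k+1$. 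At $l=2n^2-3$ the vertex $P_i[2n^2-1]$ has just entered the window, and I would retain it from that bag onwards so that it too can meet the vertices of $Q_i$ in the second phase.

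The second phase processes the vertices of $Q$ in path order. For each $\alpha$ with $Q[\alpha]\in Q_i$, the bag is $\{Q[\alpha],Q[\alpha+1]\}$ together with $\{P_j[0],P_j[2n^2-1]\}$ for every $j$ such that $Q_j$ has not yet been fully processed, giving size at most $2+2k$; this covers both the path edges of $Q$ and every edge from $Q_i[\alpha]$ to $P_i[0]$ or $P_i[2n^2-1]$. When the index leaves $Q_j$ I would drop the now-unneeded $P_j[0]$ and $P_j[2n^2-1]$, so the bag sizes in this phase are monotonically non-increasing. The three axioms of a path decomposition then reduce to a routine check: every vertex appears in a consecutive run of bags by the retention policy above, and every edge of $H$ has been accounted for. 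The maximum bag size across the whole construction is $4k+1\leq 5k+1$, so $H$ has pathwidth at most $5k$, as required. The only mildly delicate bookkeeping is around the boundaries $l=0$ and $l=2n^2-3$ of Phase 1 and at the join between the two phases, but in each case one verifies directly that no bag exceeds the stated bound.
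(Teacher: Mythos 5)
Your construction is essentially the paper's approach --- a window of three consecutive positions slid across all the paths $P_1,\ldots,P_k$ in parallel, with the endpoints $P_i[0]$ and $P_i[2n^2-1]$ kept alive long enough to meet the vertices of $Q_i$, and the $Q$-path processed in a separate, simpler phase --- but organised more economically. The paper simply keeps all $2k$ vertices of $V_{\mathrm{end}}=\{P_j[0],P_j[2n^2-1]:1\le j\le k\}$ in every bag (yielding bags of size $5k$), whereas you retain only the $P_i[0]$'s during the sliding-window phase and bring the $P_i[2n^2-1]$'s in only at the very last window bag, then drop the pair $P_j[0],P_j[2n^2-1]$ as soon as $Q_j$ has been traversed; this makes your bags no larger than $4k+1$, so you in fact prove the slightly stronger bound that $H$ has pathwidth at most $4k$. (Two small remarks: the intermediate ``introduce/forget'' bags between $B^P_l$ and $B^P_{l+1}$ are unnecessary, since adjacent bags in a path decomposition may differ by more than one element and the connectivity condition is already satisfied by the $B^P_l$'s alone, which drops your maximum bag size to $4k$; and one should check, as you observe for the gadget edge $P_i[4r+1]P_j[4r+3]$, that the window of width three indeed covers the longest edge, which it does.) The argument is correct.
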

\begin{proof}
We construct a path decomposition of $H$, indexed by a path $T$ with $|T| = |Q| + 2n^2 - 5$.  Every bag of the decomposition contains the vertices $V_{end} = \{P_j[0],P_j[2n^2-1]: 1 \leq j \leq k\}$.  In addition, for $1 \leq i \leq |Q| - 1$, the bag indexed by the $i^{th}$ node of $T$ contains the $i^{th}$ and $(i+1)^{th}$ vertices of $Q$, while, for $|Q| \leq i \leq |Q| + 2n^2 - 5$, the bag indexed by the $i^{th}$ node of $T$ contains all vertices $\{P_j[i-|Q|+1],P_j[i-|Q|+2],P_j[i-|Q|+3]: 1 \leq j \leq k \}$.  Note that every bag contains at most $5k$ vertices.  

It is immediate from this construction that, for any vertex $v \in V(H)$, the nodes indexing bags that contain $v$ induce a subpath of $T$.  So it remains to show that, for every edge $uv \in E(H)$, there exists some bag of the decomposition that contains both $u$ and $v$.  For all edges within $Q$ this is clearly true.  Note that, for $1 \leq l \leq 2n^2-2$, any vertex $P_i[l]$ is only adjacent to vertices $P_j[l']$ where $|l-l'| \leq 2$, and so any edge between internal vertices of the paths $P_1, \ldots, P_k$ must have both its endpoints in some bag of the decomposition.  All remaining edges are then incident with some $v \in V_{end}$, but $V_{end}$ is contained in every bag, and so the condition is also satisfied for these edges.

Hence we have a path decomposition of $H$ of width at most $5k$.
\end{proof}

We are now ready to prove the main theorem of this section.

\begin{proof}[Proof of Theorem \ref{LHP-hard}]
It follows immediately from Lemmas \ref{MC=>LHP} and \ref{LHP=>MC} that $(H,\mathcal{L})$ is a yes-instance for \textsc{List Hamilton Path} if and only if $G$ is a yes-instance for \textsc{Multicolour Clique}.  $H$ has order polynomial in $|G|$, and can clearly be computed from $G$ in polynomial time.  Moreover, by Lemma \ref{bdd-pathwidth} we know that the pathwidth of $H$ depends only on the parameter $k$, the number of colours used in $G$.  This completes the reduction to show that \textsc{List Hamilton Path}, parameterised by treewidth, is W[1]-hard.  Moreover, the construction given above demonstrates that if there is an algorithm to solve \textsc{List Hamilton Path} in time $f(k)n^{o(k)}$ on graphs of pathwidth $k$ (for some function $f$) then \textsc{Multicolour Clique} and hence \textsc{Clique} can be solved in time $g(k)n^{o(k)}$ (for some function $g$); thus, by Theorem \ref{eth-clique}, there can be no such algorithm for \textsc{List Hamilton Path} unless the Exponential Time Hypothesis fails.
\end{proof}

\section{Conclusions and Open Problems}

We have proved that \textsc{List Edge Chromatic Number} and \textsc{List Total Chromatic Number} are fixed parameter tractable, parameterised by treewidth, although the \textsc{List Edge Colouring} and \textsc{List Total Colouring} problems are NP-hard on graphs of treewidth at most two.  Thus, as for vertex colouring, it is computationally easier to calculate list edge or total chromatic number of a graph than to determine whether a given set of lists admits a proper colouring of the graph.  

We also demonstrated that \textsc{List Hamilton Path} is W[1]-hard, even when parameterised by pathwidth, giving another example of a problem that solvable in linear time on graphs of bounded treewidth but has a W[1]-hard list version.  Our reduction also implies $f(k)n^{o(k)}$ lower bound on the running time for any algorithm to solve this problem on graphs of treewidth $k$, under the exponential time hypothesis.  A natural open question is whether \textsc{List Hamilton Path} may belong to FPT with respect to other parameters, such as the feedback vertex set size or the vertex cover number of the graph.

More generally, it would be interesting to investigate whether there are further problems that are fixed parameter tractable parameterised by treewidth but have a list version which is W[1]-hard in this setting.

%\bibliographystyle{amsplain}
%\bibliography{treewidth-refs}

\begin{thebibliography}{10}

\bibitem{alon93}
Noga Alon, \emph{Restricted colorings of graphs}, Surveys in Combinatorics,
  1993, Proceedings, 14th British Combinatorial Conference (K.~Walker, ed.),
  London Mathematical Society Lecture Note Series, vol. 187, Cambridge
  University Press, 1993, pp.~1--33.

\bibitem{arnborg89}
Stefan Arnborg and Andrzej Proskurowski, \emph{Linear time algorithms for
  {NP}-hard problems restricted to partial $k$-trees}, Discrete Applied
  Mathematics \textbf{23} (1989), 11--24.

\bibitem{behzad65}
M.~Behzad, \emph{Graphs and their chromatic numbers}, Ph.D. thesis, Michigan
  State University, 1965.

\bibitem{bodlaender96}
Hans~L. Bodlaender, \emph{A linear-time algorithm for finding
  tree-decompositions of small treewidth}, SIAM Journal on Computing
  \textbf{25} (1996), no.~6, 1305--1317.

\bibitem{bollobas85}
B.~Bollob\'{a}s and A.~Harris, \emph{List-colourings of graphs}, Graphs and
  Combinatorics \textbf{1} (1985), 115--127.

\bibitem{borie92}
Richard~B. Borie, R.~Gary Parker, and Craig~A. Tovey, \emph{Automatic
  generation of linear-time algorithms from predicate calculus descriptions of
  problems on recursively constructed graph families}, Algorithmica \textbf{7}
  (1992), 555--581.

\bibitem{borodin97}
O.~V. Borodin, A.~V. Kostochka, and D.~R. Woodall, \emph{List edge and list total colourings of multigraphs}, Journal of Combinatorial Theory, Series B \textbf{71}
  (1997), no.~2, 184--204.

\bibitem{bruhn16}
Henning Bruhn, Richard Lang, and Maya Stein, \emph{List edge-coloring and total
  coloring in graphs of low treewidth}, Journal of Graph Theory \textbf{81}
  (2016), no.~3, 272--282.

\bibitem{chen05}
Jianer Chen, Benny Chor, Mike Fellows, Xiuzhen Huang, David Juedes, Iyad~A.
  Kanj, and Ge~Xia, \emph{Tight lower bounds for certain parameterized NP-hard
  problems}, Information and Computation \textbf{201} (2005), no.~2, 216 --
  231.

\bibitem{courcelle90}
B.~Courcelle, \emph{The monadic second-order logic of graphs {I}: recognizable
  sets of finite graphs}, Information and Computation \textbf{85} (1990),
  12--75.

\bibitem{calinescu03}
Gruia C\v{a}linescu, Cristina~G. Fernandes, and Bruce Reed, \emph{Multicuts in
  unweighted graphs and digraphs with bounded degree and bounded tree-width},
  Journal of Algorithms \textbf{48} (2003), 333--359.

\bibitem{downey13}
Rodney~G. Downey and Michael~R. Fellows, \emph{Fundamentals of parameterized
  complexity}, Texts in Computer Science, Springer, 2013.

\bibitem{fellows09}
M.~Fellows, D.~Hermelin, F.~Rosamond, and S.~Vialette, \emph{On the
  fixed-parameter intractability and tractability of multiple-interval graph
  properties}, Theoretical Computer Science \textbf{410} (2009), 53--61.

\bibitem{fellows11}
Michael~R. Fellows, Fedor~V. Fomin, Daniel Lokshtanov, Frances Rosamond, Saket
  Saurabh, Stefan Szeider, and Carsten Thomassen, \emph{On the complexity of
  some colorful problems parameterized by treewidth}, Information and
  Computation \textbf{209} (2011), 143--153.

\bibitem{flumgrohe}
J.~Flum and M.~Grohe, \emph{Parameterized complexity theory}, Springer, 2006.

\bibitem{galvin95}
F.~Galvin, \emph{The list chromatic index of a bipartite multigraph}, Journal
  of Combinatorial Theory, Series B \textbf{63} (1995), no.~1, 153--158.

\bibitem{garey76}
M.~R. Garey, D.~S. Johnson, and R.~Endre Tarjan, \emph{The planar Hamilton
  circuit problem is {NP}-complete}, SIAM Journal on Computing \textbf{5}
  (1976), 704--714.

\bibitem{haggkvist92}
Roland H\"{a}ggkvist and Amanda Chetwynd, \emph{Some upper bounds on the total
  and list chromatic numbers of multigraphs}, Journal of Graph Theory
  \textbf{16} (1992), no.~5, 503--516.

\bibitem{holyer81}
Ian Holyer, \emph{The {NP}-completeness of edge-coloring}, SIAM J. on Computing
  \textbf{10} (1981), no.~4, 718--720.

\bibitem{isobe07}
Shuji Isobe, Xiao Zhou, and Takao Nishizeki, \emph{Total colourings of
  degenerate graphs}, Combinatorica \textbf{27} (2007), no.~2, 167--182.

\bibitem{jansen97}
Klaus Jansen and Petra Scheffler, \emph{Generalized coloring for tree-like
  graphs}, Discrete Applied Mathematics \textbf{75} (1997), 135--155.

\bibitem{Kahn96}
Jeff Kahn, \emph{Asymptotics of the chromatic index for multigraphs}, Journal
  of Combinatorial Theory, Series B \textbf{68} (1996), no.~2, 233--254.

\bibitem{krishnamoorthy75}
M.~S. Krishnamoorthy, \emph{An {NP}-hard problem in bipartite graphs}, SIGACT
  News \textbf{7} (1975), 26--26.

\bibitem{leven83}
Daniel Leven and Zvi Galil, \emph{NP completeness of finding the chromatic
  index of regular graphs}, Journal of Algorithms \textbf{4} (1983), no.~1,
  35--44.

\bibitem{lokshtanov13}
Daniel Lokshtanov, D{\'a}niel Marx, Saket Saurabh, et~al., \emph{Lower bounds
  based on the exponential time hypothesis}, Bulletin of EATCS \textbf{3}
  (2013), no.~105.

\bibitem{marx05}
D\'{a}niel Marx, \emph{{NP}-completeness of list coloring and precoloring
  extension on the edges of planar graphs}, Journal of Graph Theory \textbf{49}
  (2005), 313--324.

\bibitem{mcdiarmid94}
Colin~J.H. McDiarmid and Abd\'{o}n S\'{a}nchez-Arroyo, \emph{Total colouring
  regular bipartite graphs is {NP}-hard}, Discrete Mathematics \textbf{124}
  (1994), no.~1-3, 155--162.

\bibitem{seese86}
D.~Seese, \emph{Tree-partite graphs and the complexity of algorithms},
  Fundamentals of Computation Theory (Lothar Budach, ed.), Lecture Notes in
  Computer Science, vol. 199, Springer Berlin / Heidelberg, 1985, pp.~412--421.

\bibitem{vizing64}
V.~G. Vizing, \emph{On an estimate of the chromatic class of a $p$-graph.
  ({Russian})}, Diskret. Analiz \textbf{3} (1964), 25--30.

\bibitem{vizing68}
V.~G. Vizing, \emph{Some unsovled problems in graph theory ({Russian})}, Uspekhi
  Math.~Nauk. \textbf{23} (1968), 117--134.

\bibitem{zhou05}
Xiao Zhou, Yuki Matsuo, and Takao Nishizeki, \emph{List total colorings of
  series-parallel graphs}, Journal of Discrete Algorithms \textbf{3} (2005),
  47--60.

\bibitem{zhou96}
Xiao Zhou, {Shin-ichi} Nakano, and Takao Nishizeki, \emph{Edge-colouring
  partial $k$-trees}, Journal of Algorithms \textbf{21} (1996), 598--617.

\end{thebibliography}

\end{document}